\theoremstyle{definition}
\newtheorem{definition}{Definition}[section]
\newtheorem{theorem}{Theorem}[section]
\newtheorem{lemma}[theorem]{Lemma}
\algnewcommand\algorithmicforeach{\textbf{for each}}
\title{Supplemental Information for “Diverse Community Data for Benchmarking Data Privacy Algorithms”}
\begin{document}

\maketitle

\renewcommand{\contentsname}{Supplemental Information Contents}
\setcounter{tocdepth}{2} 
\tableofcontents

\appendix

\section{Disclaimer}

The Collaborative Research Cycle (CRC), the Data and Metrics Archive, and SDNist are intended as tools to encourage investigation and discussion of deidentification algorithms, and they are not intended or suitable for product evaluation. The National Institute of Standards and Technology does not endorse any algorithm included in these resources. No mention of a commercial product in this paper or any CRC resource constitutes an endorsement.

\hypertarget{dataset-provisions}{%
\section{Dataset Provisions}\label{dataset-provisions}}

\hypertarget{dataset-url}{%
\subsection{Dataset URLs}\label{dataset-url}}

\begin{itemize}

    \item \href{https://data.nist.gov/od/id/mds2-2895}{The NIST public data
repository access point.}
    \item \href{https://github.com/usnistgov/SDNist/tree/main/nist\%20diverse\%20communities\%20data\%20excerpts}{Direct
data access link.}
    \item \href{https://github.com/usnistgov/SDNist/blob/main/nist\%20diverse\%20communities\%20data\%20excerpts/data_dictionary.json}{Direct
data access to the data dictionary.} 
  \item The dataset DOI is \href{https://doi.org/10.18434/mds2-2895}{10.18434/mds2-2895.}
\end{itemize}

\hypertarget{dataset-format-notes}{%
\subsubsection{Dataset Format Notes}\label{dataset-format-notes}}

The raw data are in comma-sperated value (CSV) format with (Javascript object notation) JSON data dictionaries defining valid
values.

The NIST data repository has a structured metadata retrieval system that
interfaces with \href{https://www.data.gov}{data.gov} and conforms to
\href{https://www.go-fair.org/fair-principles/}{FAIR principles} and the
\href{https://strategy.data.gov/}{best practice for Federal Data
Strategy}. See additional information
\href{https://data.nist.gov/sdp/\#/about}{here}.

\hypertarget{author-statement}{%
\subsection{Author Statement}\label{author-statement}}

The authors bear all responsibility in case of violation of rights. We
have confirmed licensing and provide detailed information in the article
and in the dataset datasheet.

\hypertarget{hosting-and-licensing}{%
\subsection{Hosting and Licensing}\label{hosting-and-licensing}}

The data associated with this publication were created, hosted, and
maintained, by the National Institute of Standards and Technology in
their \href{https://data.nist.gov/od/id/mds2-2895}{permanent data
repository}, in perpetuity.

\noindent The data are in the public domain. \href{https://www.nist.gov/open/copyright-fair-use-and-licensing-statements-srd-data-software-and-technical-series-publications}{NIST statement on software and data}:

\noindent "NIST-developed software is provided by NIST as a public service. You may use, copy, and distribute copies of the software in any medium, provided that you keep intact this entire notice. You may improve, modify, and create derivative works of the software or any portion of the software, and you may copy and distribute such modifications or works. Modified works should carry a notice stating that you changed the software and should note the date and nature of any such change. Please explicitly acknowledge the National Institute of Standards and Technology as the source of the software. 
NIST-developed software is expressly provided "AS IS." NIST MAKES NO WARRANTY OF ANY KIND, EXPRESS, IMPLIED, IN FACT, OR ARISING BY OPERATION OF LAW, INCLUDING, WITHOUT LIMITATION, THE IMPLIED WARRANTY OF MERCHANTABILITY, FITNESS FOR A PARTICULAR PURPOSE, NON-INFRINGEMENT, AND DATA ACCURACY. NIST NEITHER REPRESENTS NOR WARRANTS THAT THE OPERATION OF THE SOFTWARE WILL BE UNINTERRUPTED OR ERROR-FREE, OR THAT ANY DEFECTS WILL BE CORRECTED. NIST DOES NOT WARRANT OR MAKE ANY REPRESENTATIONS REGARDING THE USE OF THE SOFTWARE OR THE RESULTS THEREOF, INCLUDING BUT NOT LIMITED TO THE CORRECTNESS, ACCURACY, RELIABILITY, OR USEFULNESS OF THE SOFTWARE.
You are solely responsible for determining the appropriateness of using and distributing the software and you assume all risks associated with its use, including but not limited to the risks and costs of program errors, compliance with applicable laws, damage to or loss of data, programs or equipment, and the unavailability or interruption of operation. This software is not intended to be used in any situation where a failure could cause risk of injury or damage to property. The software developed by NIST employees is not subject to copyright protection within the United States."

\hypertarget{datasheet-for-dataset-nist-diverse-communities-data-excerpts}{%
\section{Datasheet for dataset ``NIST Diverse Communities Data
Excerpts''}\label{datasheet-for-dataset-nist-diverse-communities-data-excerpts}}

Questions from the \href{https://arxiv.org/abs/1803.09010}{Datasheets
for Datasets} paper, v7.

Jump to section:

\begin{itemize}
\item
  \protect\hyperlink{motivation}{Motivation}
\item
  \protect\hyperlink{composition}{Composition}
\item
  \protect\hyperlink{collection-process}{Collection process}
\item
  \protect\hyperlink{preprocessingcleaninglabeling}{Preprocessing/cleaning/labeling}
\item
  \protect\hyperlink{uses}{Uses}
\item
  \protect\hyperlink{distribution}{Distribution}
\item
  \protect\hyperlink{maintenance}{Maintenance}
\end{itemize}

\hypertarget{motivation}{%
\subsection{Motivation}\label{motivation}}

\hypertarget{for-what-purpose-was-the-dataset-created}{%
\subsubsection{For what purpose was the dataset
created?}\label{for-what-purpose-was-the-dataset-created}}

The NIST Diverse Communities Data Excerpts (the Excerpts) are
demographic data created as benchmark data for deidentification
technologies.

The Excerpts are designed to contain sufficient complexity to be
challenging to de-identify and with a compact feature set to make them
tractable for analysis. We also demonstrate the data contain
subpopulations with varying levels of feature independence, which leads
to small cell counts, a particularly challenging deidentification
problem.

The Excerpts serve as benchmark data for two open source projects at the
National Institute of Standards and Technology (NIST): the
\href{https://doi.org/10.18434/mds2-2943}{SDNist Deidentified Data
Report tool} and the
\href{https://pages.nist.gov/privacy_collaborative_research_cycle/}{2023
Collaborative Research Cycle} (CRC).

\hypertarget{who-created-the-dataset-e.g.-which-team-research-group-and-on-behalf-of-which-entity-e.g.-company-institution-organization}{%
\subsubsection{Who created the dataset (e.g., which team, research
group) and on behalf of which entity (e.g., company, institution,
organization)?}\label{who-created-the-dataset-e.g.-which-team-research-group-and-on-behalf-of-which-entity-e.g.-company-institution-organization}}

The Excerpts were created by the
\href{https://www.nist.gov/itl/applied-cybersecurity/privacy-engineering}{Privacy
Engineering Program} of the \href{https://www.nist.gov/itl}{Information
Technology Laboratory} at the \href{https://www.nist.gov}{National
Institute of Standards and Technology} (NIST).

The underlying data was published by the U.S. Census Bureau as part of
the 2019 American Community Survey (ACS)
\href{https://www.census.gov/programs-surveys/acs/microdata/access.2019.html\#list-tab-735824205}{Public
Use Microdata Sample} (PUMS).

\hypertarget{who-funded-the-creation-of-the-dataset}{%
\subsubsection{Who funded the creation of the
dataset?}\label{who-funded-the-creation-of-the-dataset}}

The data were collected by the U.S. Census Bureau, and the Excerpts were
curated by NIST. Both are U.S. Government agencies within the Department
of Commerce. Aspects of the Excerpts creation were supported under NIST
contract 1333ND18DNB630011.

\hypertarget{any-other-comments}{%
\subsubsection{Any other comments?}\label{any-other-comments}}

No.

\hypertarget{composition}{%
\subsection{Composition}\label{composition}}

\hypertarget{what-do-the-instances-that-comprise-the-dataset-represent-e.g.-documents-photos-people-countries}{%
\subsubsection{What do the instances that comprise the dataset represent
(e.g., documents, photos, people,
countries)?}\label{what-do-the-instances-that-comprise-the-dataset-represent-e.g.-documents-photos-people-countries}}

The instances in the data represent individual people.

The Excerpts consist of a small curated geography and feature set
derived from the significantly larger 2019 American Community Survey
(ACS) Public Use Microdata Sample (PUMS), a publicly available product
of the U.S. Census Bureau. The original ACS schema contains over four
hundred features, which poses difficulties for accurately diagnosing
shortcomings in deidentification algorithms. The Excerpts use a small
but representative selection of 24 features, covering major census
categories: Demographic, Household and Family, Geographic, Financial,
Work and Education, Disability, and Survey Weights. Several Excerpts
features are derivatives of the original ACS features, designed to
provide easier access to certain information (such as income decile or
population density).

There is only one type of instance. All records in the data represent
separate, individual people.

\hypertarget{how-many-instances-are-there-in-total-of-each-type-if-appropriate}{%
\subsubsection{How many instances are there in total (of each type, if
appropriate)?}\label{how-many-instances-are-there-in-total-of-each-type-if-appropriate}}

There are three geographic partitions in the data. See the ``postcards''
and data dictionaries in each respective directory for more detailed
information. Instances in partitions: 
\begin{itemize}
\item \texttt{national}: 27254 records
\item \texttt{massachusetts}: 7634 records 
\item \texttt{texas}: 9276 records
\end{itemize}

\hypertarget{does-the-dataset-contain-all-possible-instances-or-is-it-a-sample-not-necessarily-random-of-instances-from-a-larger-set}{%
\subsubsection{Does the dataset contain all possible instances or is it
a sample (not necessarily random) of instances from a larger
set?}\label{does-the-dataset-contain-all-possible-instances-or-is-it-a-sample-not-necessarily-random-of-instances-from-a-larger-set}}

The data set is a curated sample of the ACS by geography, with a reduced
feature set designed to provide a tractable foundation for benchmarking
deidentification algorithms (24 features rather than the original ACS's
400 features). Geographically it is comprised of 31 Public Use Microdata
Areas 
\begin{itemize}
\item \texttt{national}: 27254 records drawn from 20 Public Use
Microdata Areas (PUMAs) from across the United States. This excerpt was
selected to include communities with very diverse subpopulation
distributions. 
\item \texttt{texas}: 9276 records drawn from six PUMAs of
communities surrounding Dallas-Fort Worth, Texas area. This excerpt was
selected to focus on areas with moderate diversity. 
\item \texttt{massachusetts}: 7634 records drawn from five PUMAs of
communities from the North Shore to the west of the greater Boston,
Massachusetts area. This excerpt was selected to focus on areas with
less diversity.
\end{itemize}

\hypertarget{what-data-does-each-instance-consist-of}{%
\subsubsection{What data do each instance consist
of?}\label{what-data-does-each-instance-consist-of}}

The instances are individual, tabular data records in CSV format with
Demographic, Household and Family, Geographic, Financial, Work and
Education, Disability, and Survey Weights features.

In addition, there are metadata and documentation, \href{https://github.com/usnistgov/SDNist/blob/main/nist%20diverse%20communities%20data%20excerpts/national/national2019.json}{schema files} for each of the three geographies containing the features and valid data ranges in JSON format, and \href{https://github.com/usnistgov/SDNist/blob/main/nist%20diverse%20communities%20data%20excerpts/national/National%20Excerpts%20Postcard%20Descriptions.pdf}{`postcard' documentation} with English-language descriptions of the areas described by the data in PDF format. There are also an overarching \href{https://github.com/usnistgov/SDNist/tree/main/nist%20diverse%20communities%20data%20excerpts}{readme} and \href{https://github.com/usnistgov/SDNist/blob/main/nist%20diverse%20communities%20data%20excerpts/data_dictionary.json}{data dictionary}.

\hypertarget{is-there-a-label-or-target-associated-with-each-instance}{%
\subsubsection{Is there a label or target associated with each
instance?}\label{is-there-a-label-or-target-associated-with-each-instance}}

No.~These data are not designed specifically for classifier tasks.

\hypertarget{is-any-information-missing-from-individual-instances}{%
\subsubsection{Is any information missing from individual
instances?}\label{is-any-information-missing-from-individual-instances}}

There is no missing information in these excerpts, all records are
complete.

\hypertarget{are-relationships-between-individual-instances-made-explicit-e.g.-users-movie-ratings-social-network-links}{%
\subsubsection{Are relationships between individual instances made
explicit (e.g., users' movie ratings, social network
links)?}\label{are-relationships-between-individual-instances-made-explicit-e.g.-users-movie-ratings-social-network-links}}

Relationships between records have not been included in this version of
the data. Although the Excerpts data do contain multiple individuals
from the same household, it does not include the ACS PUMS Household ID
or relationship features needed to join them into a network. We expect
to include those features in a future update to the Excerpts.

\hypertarget{are-there-recommended-data-splits-e.g.-training-developmentvalidation-testing}{%
\subsubsection{Are there recommended data splits (e.g., training,
development/validation,
testing)?}\label{are-there-recommended-data-splits-e.g.-training-developmentvalidation-testing}}

There are three geographic partitions to facilitate benchmarking
algorithms on populations with differing levels of
heterogeneity/diversity (MA, TX and National). There are no splits
designed specifically for training and testing purposes. All of the data
presented at this time are from the 2019 ACS collection. In the future
we plan to add additional years.

\hypertarget{are-there-any-errors-sources-of-noise-or-redundancies-in-the-dataset}{%
\subsubsection{Are there any errors, sources of noise, or redundancies
in the
dataset?}\label{are-there-any-errors-sources-of-noise-or-redundancies-in-the-dataset}}

Although ACS data consumers generally assume the data remains
representative of the real population, the ACS PUMS data have had basic
statistical disclosure control deidentification applied (including
swapping and subsampling), which may impact its distribution. For more
information, see documentation from the
\href{https://www.census.gov/programs-surveys/acs/library/handbooks/general.html}{U.S.
Census Bureau.}

\hypertarget{is-the-dataset-self-contained-or-does-it-link-to-or-otherwise-rely-on-external-resources-e.g.-websites-tweets-other-datasets}{%
\subsubsection{Is the dataset self-contained, or does it link to or
otherwise rely on external resources (e.g., websites, tweets, other
datasets)?}\label{is-the-dataset-self-contained-or-does-it-link-to-or-otherwise-rely-on-external-resources-e.g.-websites-tweets-other-datasets}}

All of the data are self-contained within the repository. The data are
drawn from public domain sources and, thus, have no restrictions on usage.

\hypertarget{does-the-dataset-contain-data-that-might-be-considered-confidential}{%
\subsubsection{Does the dataset contain data that might be considered
confidential?}\label{does-the-dataset-contain-data-that-might-be-considered-confidential}}

The Excerpts are a subset
\href{https://www.census.gov/programs-surveys/acs/microdata/access.2019.html\#list-tab-735824205}{public
data published by the U.S. Census Bureau}. The U.S. Census Bureau is
bound by law, under Title 13 of the U.S. Code, to protect the identities
of individuals represented by the data.
\href{https://www.census.gov/about/policies/privacy/data_stewardship.html}{See
here} for details on the Census' data stewardship. The Census takes
elaborate steps to reduce risk of re-identification of individuals
surveyed and provide information regarding their suppression scheme
\href{https://www.census.gov/programs-surveys/acs/technical-documentation/data-suppression.html}{here}.

\hypertarget{does-the-dataset-contain-data-that-if-viewed-directly-might-be-offensive-insulting-threatening-or-might-otherwise-cause-anxiety}{%
\subsubsection{Does the dataset contain data that, if viewed directly,
might be offensive, insulting, threatening, or might otherwise cause
anxiety?}\label{does-the-dataset-contain-data-that-if-viewed-directly-might-be-offensive-insulting-threatening-or-might-otherwise-cause-anxiety}}

No.

\hypertarget{does-the-dataset-relate-to-people}{%
\subsubsection{Does the dataset relate to
people?}\label{does-the-dataset-relate-to-people}}

Yes.

\hypertarget{does-the-dataset-identify-any-subpopulations-e.g.-by-age-gender}{%
\subsubsection{Does the dataset identify any subpopulations (e.g., by
age,
gender)?}\label{does-the-dataset-identify-any-subpopulations-e.g.-by-age-gender}}

The data include demographic features such as Age, Sex, Race and
Hispanic Origin which may be used to disaggregate by subpopulation. It
additionally includes non-demographic features such as Educational
Attainment, Income Decile and Industry Category which also produce
subpopulation distributions with disparate patterns of feature
correlations.

The racial and ethnicity subpopulation breakdown by geography is as
follows (note that Hispanic origin and race are separate features): 
\begin{itemize}
    \item \texttt{MA Dataset (less diverse)}: 4\% Hispanic and 89\% White, 7\% Asian, 2\% Black, 2\% Other, 0\% AIANNH 
    \item \texttt{TX Dataset (more diverse)}: 19\% Hispanic
and 85\% White, 7\% Black, 4\% Other, 3\% Asian, 1\% AIANNH
    \item \texttt{National Dataset (especially diverse)}: 10\% Hispanic and 56\% White, 22\% Black, 10\% Other, 9\% Asian, 3\% AIANNH
\end{itemize}

\hypertarget{is-it-possible-to-identify-individuals-i.e.-one-or-more-natural-persons-either-directly-or-indirectly-i.e.-in-combination-with-other-data-from-the-dataset}{%
\subsubsection{Is it possible to identify individuals (i.e., one or more
natural persons), either directly or indirectly (i.e., in combination
with other data) from the
dataset?}\label{is-it-possible-to-identify-individuals-i.e.-one-or-more-natural-persons-either-directly-or-indirectly-i.e.-in-combination-with-other-data-from-the-dataset}}

The Excerpts are survey results from real individuals as collected by
the U.S. Census Bureau.
\href{\%5B\#\#\#\%20Does\%20the\%20dataset\%20contain\%20data\%20that\%20might\%20be\%20considered\%20confidential?}{See
response above} for more information about Census' data protections.

The subset of the Census' data that we provide here introduces no
additional information, and, therefore, does not increase the risk of
identifying individuals.

\hypertarget{does-the-dataset-contain-data-that-might-be-considered-sensitive-in-any-way-e.g.-data-that-reveals-racial-or-ethnic-origins-sexual-orientations-religious-beliefs-political-opinions-or-union-memberships-or-locations-financial-or-health-data-biometric-or-genetic-data-forms-of-government-identification-such-as-social-security-numbers-criminal-history}{%
\subsubsection{Does the dataset contain data that might be considered
sensitive in any way (e.g., data that reveals racial or ethnic origins,
sexual orientations, religious beliefs, political opinions or union
memberships, or locations; financial or health data; biometric or
genetic data; forms of government identification, such as social
security numbers; criminal
history)?}\label{does-the-dataset-contain-data-that-might-be-considered-sensitive-in-any-way-e.g.-data-that-reveals-racial-or-ethnic-origins-sexual-orientations-religious-beliefs-political-opinions-or-union-memberships-or-locations-financial-or-health-data-biometric-or-genetic-data-forms-of-government-identification-such-as-social-security-numbers-criminal-history}}

Yes. These data are detailed demographic records.
\href{\%5B\#\#\#\%20Does\%20the\%20dataset\%20contain\%20data\%20that\%20might\%20be\%20considered\%20confidential?}{See
response above} for more information about Census Bureau's data
protections.

\hypertarget{any-other-comments-1}{%
\subsubsection{Any other comments?}\label{any-other-comments-1}}

No.

\hypertarget{collection-process}{%
\subsection{Collection process}\label{collection-process}}

\hypertarget{how-was-the-data-associated-with-each-instance-acquired}{%
\subsubsection{How was the data associated with each instance
acquired?}\label{how-was-the-data-associated-with-each-instance-acquired}}

These data is a curated geographic subsample of the 2019 American
Community Survey Public Use Microdata files. The U.S. Census Bureau
details its survey data collection approach
\href{https://www.census.gov/programs-surveys/acs/library/handbooks/general.html}{here}.

\hypertarget{what-mechanisms-or-procedures-were-used-to-collect-the-data-e.g.-hardware-apparatus-or-sensor-manual-human-curation-software-program-software-api}{%
\subsubsection{What mechanisms or procedures were used to collect the
data (e.g., hardware apparatus or sensor, manual human curation,
software program, software
API)?}\label{what-mechanisms-or-procedures-were-used-to-collect-the-data-e.g.-hardware-apparatus-or-sensor-manual-human-curation-software-program-software-api}}

See previous response.

\hypertarget{if-the-dataset-is-a-sample-from-a-larger-set-what-was-the-sampling-strategy-e.g.-deterministic-probabilistic-with-specific-sampling-probabilities}{%
\subsubsection{If the dataset is a sample from a larger set, what was
the sampling strategy (e.g., deterministic, probabilistic with specific
sampling
probabilities)?}\label{if-the-dataset-is-a-sample-from-a-larger-set-what-was-the-sampling-strategy-e.g.-deterministic-probabilistic-with-specific-sampling-probabilities}}

The data set is a (deterministic) curated sample by geography. It is
comprised of 31 Public Use Microdata Areas 

\begin{itemize}
\item \texttt{national}: 27254
records drawn from 20 Public Use Microdata Areas (PUMAs) from across the
United States. This excerpt was selected to include communities with
very diverse subpopulation distributions. 
\item \texttt{texas}: 9276 records
drawn from six PUMAs of communities surrounding Dallas-Fort Worth, Texas
area. This excerpt was selected to focus on areas with moderate
diversity. 
\item \texttt{massachusetts}: 7634 records drawn from five PUMAs
of communities from the North Shore to the west of the greater Boston,
Massachusetts area. This excerpt was selected to focus on areas with
less diversity.
\end{itemize}

\hypertarget{who-was-involved-in-the-data-collection-process-e.g.-students-crowdworkers-contractors-and-how-were-they-compensated-e.g.-how-much-were-crowdworkers-paid}{%
\subsubsection{Who was involved in the data collection process (e.g.,
students, crowdworkers, contractors) and how were they compensated
(e.g., how much were crowdworkers
paid)?}\label{who-was-involved-in-the-data-collection-process-e.g.-students-crowdworkers-contractors-and-how-were-they-compensated-e.g.-how-much-were-crowdworkers-paid}}

{[}See response above.{]}

\hypertarget{over-what-timeframe-was-the-data-collected}{%
\subsubsection{Over what timeframe was the data
collected?}\label{over-what-timeframe-was-the-data-collected}}

These data was collected during 2019.

\hypertarget{were-any-ethical-review-processes-conducted-e.g.-by-an-institutional-review-board}{%
\subsubsection{Were any ethical review processes conducted (e.g., by an
institutional review
board)?}\label{were-any-ethical-review-processes-conducted-e.g.-by-an-institutional-review-board}}

The Excerpts are a curated subsample of existing public data published
by the U.S. Government. No internal review board (IRB)review was necessary by institution
policy.

\hypertarget{does-the-dataset-relate-to-people-1}{%
\subsubsection{Does the dataset relate to
people?}\label{does-the-dataset-relate-to-people-1}}

Yes.

\hypertarget{did-you-collect-the-data-from-the-individuals-in-question-directly-or-obtain-it-via-third-parties-or-other-sources-e.g.-websites}{%
\subsubsection{Did you collect the data from the individuals in question
directly, or obtain it via third parties or other sources (e.g.,
websites)?}\label{did-you-collect-the-data-from-the-individuals-in-question-directly-or-obtain-it-via-third-parties-or-other-sources-e.g.-websites}}

Other Sources. These data are a curated geographic subsample of the 2019
American Community Survey Public Use Microdata files, which are
\href{https://www.census.gov/programs-surveys/acs/microdata/access.2019.html\#list-tab-735824205}{available
here}.

\hypertarget{were-the-individuals-in-question-notified-about-the-data-collection}{%
\subsubsection{Were the individuals in question notified about the data
collection?}\label{were-the-individuals-in-question-notified-about-the-data-collection}}

Yes.
\protect\hyperlink{ux5cux23ux5cux23ux5cux2520Howux5cux2520wasux5cux2520theux5cux2520dataux5cux2520associatedux5cux2520withux5cux2520eachux5cux2520instanceux5cux2520acquiredux3f}{See
response above.}

\hypertarget{did-the-individuals-in-question-consent-to-the-collection-and-use-of-their-data}{%
\subsubsection{Did the individuals in question consent to the collection
and use of their
data?}\label{did-the-individuals-in-question-consent-to-the-collection-and-use-of-their-data}}

Yes.
\protect\hyperlink{ux5cux23ux5cux23ux5cux2520Howux5cux2520wasux5cux2520theux5cux2520dataux5cux2520associatedux5cux2520withux5cux2520eachux5cux2520instanceux5cux2520acquiredux3f}{See
response above.}

\hypertarget{if-consent-was-obtained-were-the-consenting-individuals-provided-with-a-mechanism-to-revoke-their-consent-in-the-future-or-for-certain-uses}{%
\subsubsection{If consent was obtained, were the consenting individuals
provided with a mechanism to revoke their consent in the future or for
certain
uses?}\label{if-consent-was-obtained-were-the-consenting-individuals-provided-with-a-mechanism-to-revoke-their-consent-in-the-future-or-for-certain-uses}}

\protect\hyperlink{ux5cux23ux5cux23ux5cux2520Howux5cux2520wasux5cux2520theux5cux2520dataux5cux2520associatedux5cux2520withux5cux2520eachux5cux2520instanceux5cux2520acquiredux3f}{See
response above.}

\hypertarget{has-an-analysis-of-the-potential-impact-of-the-dataset-and-its-use-on-data-subjects-e.g.-a-data-protection-impact-analysis-been-conducted}{%
\subsubsection{Has an analysis of the potential impact of the dataset
and its use on data subjects (e.g., a data protection impact analysis)
been
conducted?}\label{has-an-analysis-of-the-potential-impact-of-the-dataset-and-its-use-on-data-subjects-e.g.-a-data-protection-impact-analysis-been-conducted}}

These data is a curated geographic subsample of the 2019 American
Community Survey (ACS) Public Use Microdata files. Many investigations
have examined ACS data with some information
\href{https://www.census.gov/programs-surveys/acs/library/handbooks/general.html}{published
by the Census Bureau itself}.

The data presented here, the Excerpts, are a subset of the data and
present no additional risks to the subjects surveyed by the Census.

\hypertarget{any-other-comments-2}{%
\subsubsection{Any other comments?}\label{any-other-comments-2}}

No.

\hypertarget{preprocessingcleaninglabeling}{%
\subsection{Preprocessing/cleaning/labeling}\label{preprocessingcleaninglabeling}}

\hypertarget{was-any-preprocessingcleaninglabeling-of-the-data-done-e.g.-discretization-or-bucketing-tokenization-part-of-speech-tagging-sift-feature-extraction-removal-of-instances-processing-of-missing-values}{%
\subsubsection{Was any preprocessing/cleaning/labeling of the data done
(e.g., discretization or bucketing, tokenization, part-of-speech
tagging, SIFT feature extraction, removal of instances, processing of
missing
values)?}\label{was-any-preprocessingcleaninglabeling-of-the-data-done-e.g.-discretization-or-bucketing-tokenization-part-of-speech-tagging-sift-feature-extraction-removal-of-instances-processing-of-missing-values}}

The original ACS data are clean, and no class labeling was done. However,
several Excerpts features are new derivatives of ACS features designed
to provide easier access to certain information. Population DENSITY
divides PUMA population by surface area and allows models to distinguish
rural and urban geographies. INDP\_CAT aggregates detailed industry
codes into a small set of broad categories. PINCP\_DECILE aggregates
incomes into percentile bins relative to the record's state. And, EDU
simplifies the original ACS schooling feature to focus on milestone
grades and degrees.

\hypertarget{was-the-raw-data-saved-in-addition-to-the-preprocessedcleanedlabeled-data-e.g.-to-support-unanticipated-future-uses}{%
\subsubsection{Was the ``raw'' data saved in addition to the
preprocessed/cleaned/labeled data (e.g., to support unanticipated future
uses)?}\label{was-the-raw-data-saved-in-addition-to-the-preprocessedcleanedlabeled-data-e.g.-to-support-unanticipated-future-uses}}

See the U.S. Census Bureau's documentation for information about
\href{https://www.census.gov/programs-surveys/acs/library/handbooks/general.html}{published
ACS data}.

\hypertarget{is-the-software-used-to-preprocesscleanlabel-the-instances-available}{%
\subsubsection{Is the software used to preprocess/clean/label the
instances
available?}\label{is-the-software-used-to-preprocesscleanlabel-the-instances-available}}

The preprocessing was minimal (addition of a small set of derivative
features), and can be reproduced as described above. The code is not
currently available.

\hypertarget{any-other-comments-3}{%
\subsubsection{Any other comments?}\label{any-other-comments-3}}

No.

\hypertarget{uses}{%
\subsection{Uses}\label{uses}}

\hypertarget{has-the-dataset-been-used-for-any-tasks-already}{%
\subsubsection{Has the dataset been used for any tasks
already?}\label{has-the-dataset-been-used-for-any-tasks-already}}

The Excerpts serve as benchmark data for two open source projects at the
National Institute of Standards and Technology (NIST): the
\href{https://doi.org/10.18434/mds2-2943}{SDNist Deidentified Data
Report tool} and the
\href{https://pages.nist.gov/privacy_collaborative_research_cycle/}{2023
Collaborative Research Cycle} (CRC).

\hypertarget{is-there-a-repository-that-links-to-any-or-all-papers-or-systems-that-use-the-dataset}{%
\subsubsection{Is there a repository that links to any or all papers or
systems that use the
dataset?}\label{is-there-a-repository-that-links-to-any-or-all-papers-or-systems-that-use-the-dataset}}

No.~Users are not mandated to contribute their work to any central
repository. We publish user-contributed data
\href{https://github.com/usnistgov/privacy_collaborative_research_cycle}{here}.
We recommend that data users cite our work using the
\href{https://doi.org/10.18434/mds2-2895}{dataset DOI}.

\hypertarget{what-other-tasks-could-the-dataset-be-used-for}{%
\subsubsection{What (other) tasks could the dataset be used
for?}\label{what-other-tasks-could-the-dataset-be-used-for}}

The Excerpts were designed for benchmarking privacy-preserving data
deidentification techniques such as synthetic data or statistical
disclosure limitation. However, they can be used to study the behavior
of any tabular data machine learning or analysis technique when applied
to diverse populations. Synthetic data generators are just an especially
verbose application of machine learning (producing full records rather
than class labels), so tools designed to improve understanding of
synthetic data have potential for a much broader application.

\hypertarget{is-there-anything-about-the-composition-of-the-dataset-or-the-way-it-was-collected-and-preprocessedcleanedlabeled-that-might-impact-future-uses}{%
\subsubsection{Is there anything about the composition of the dataset or
the way it was collected and preprocessed/cleaned/labeled that might
impact future
uses?}\label{is-there-anything-about-the-composition-of-the-dataset-or-the-way-it-was-collected-and-preprocessedcleanedlabeled-that-might-impact-future-uses}}

The U.S. Census Bureau recommends using sampling weights to account for
survey undersampling and generate equitable full population statistics.
The PWGPT feature included in the Excerpts is the person (record) level
sampling weight. For full population statistics, each record should be
multiplied by its sampling weight.

\hypertarget{are-there-tasks-for-which-the-dataset-should-not-be-used}{%
\subsubsection{Are there tasks for which the dataset should not be
used?}\label{are-there-tasks-for-which-the-dataset-should-not-be-used}}

The Excerpts are suitable for any application relevant to government
survey data over the selected feature set.

\hypertarget{any-other-comments-4}{%
\subsubsection{Any other comments?}\label{any-other-comments-4}}

No.

\hypertarget{distribution}{%
\subsection{Distribution}\label{distribution}}

\hypertarget{will-the-dataset-be-distributed-to-third-parties-outside-of-the-entity-e.g.-company-institution-organization-on-behalf-of-which-the-dataset-was-created}{%
\subsubsection{Will the dataset be distributed to third parties outside
of the entity (e.g., company, institution, organization) on behalf of
which the dataset was
created?}\label{will-the-dataset-be-distributed-to-third-parties-outside-of-the-entity-e.g.-company-institution-organization-on-behalf-of-which-the-dataset-was-created}}

\hypertarget{has-the-dataset-been-used-for-any-tasks-already-1}{%
\subsubsection{Has the dataset been used for any tasks
already?}\label{has-the-dataset-been-used-for-any-tasks-already-1}}

The Excerpts serve as benchmark data for two open source projects at the
National Institute of Standards and Technology (NIST): the
\href{https://doi.org/10.18434/mds2-2943}{SDNist Deidentified Data
Report tool} and the
\href{https://pages.nist.gov/privacy_collaborative_research_cycle/}{2023
Collaborative Research Cycle} (CRC).

\hypertarget{how-will-the-dataset-will-be-distributed-e.g.-tarball-on-website-api-github}{%
\subsubsection{How will the dataset will be distributed (e.g., tarball
on website, API,
GitHub)?}\label{how-will-the-dataset-will-be-distributed-e.g.-tarball-on-website-api-github}}

\href{https://doi.org/10.18434/mds2-2895}{10.18434/mds2-289}

\hypertarget{when-will-the-dataset-be-distributed}{%
\subsubsection{When will the dataset be
distributed?}\label{when-will-the-dataset-be-distributed}}

The dataset is currently available to the public.

\hypertarget{will-the-dataset-be-distributed-under-a-copyright-or-other-intellectual-property-ip-license-andor-under-applicable-terms-of-use-tou}{%
\subsubsection{Will the dataset be distributed under a copyright or
other intellectual property (IP) license, and/or under applicable terms
of use
(ToU)?}\label{will-the-dataset-be-distributed-under-a-copyright-or-other-intellectual-property-ip-license-andor-under-applicable-terms-of-use-tou}}

The data are in the public domain.
\href{https://www.nist.gov/open/copyright-fair-use-and-licensing-statements-srd-data-software-and-technical-series-publications}{See
the following statement from NIST}.

\hypertarget{have-any-third-parties-imposed-ip-based-or-other-restrictions-on-the-data-associated-with-the-instances}{%
\subsubsection{Have any third parties imposed IP-based or other
restrictions on the data associated with the
instances?}\label{have-any-third-parties-imposed-ip-based-or-other-restrictions-on-the-data-associated-with-the-instances}}

No.~All data are drawn from public domain sources.

\hypertarget{do-any-export-controls-or-other-regulatory-restrictions-apply-to-the-dataset-or-to-individual-instances}{%
\subsubsection{Do any export controls or other regulatory restrictions
apply to the dataset or to individual
instances?}\label{do-any-export-controls-or-other-regulatory-restrictions-apply-to-the-dataset-or-to-individual-instances}}

No.~All data are drawn from public domain sources and have no known
export or regulatory restrictions.

\hypertarget{any-other-comments-5}{%
\subsubsection{Any other comments?}\label{any-other-comments-5}}

No.

\hypertarget{maintenance}{%
\subsection{Maintenance}\label{maintenance}}

\hypertarget{who-is-supportinghostingmaintaining-the-dataset}{%
\subsubsection{Who is supporting/hosting/maintaining the
dataset?}\label{who-is-supportinghostingmaintaining-the-dataset}}

This dataset is hosted by NIST and maintained by the Privacy Engineering
Program.

\hypertarget{how-can-the-ownercuratormanager-of-the-dataset-be-contacted-e.g.-email-address}{%
\subsubsection{How can the owner/curator/manager of the dataset be
contacted (e.g., email
address)?}\label{how-can-the-ownercuratormanager-of-the-dataset-be-contacted-e.g.-email-address}}

Dataset managers can be reached by
\href{https://github.com/usnistgov/SDNist/issues}{raising an issue},
emailing the \href{PrivacyEng@nist.gov}{Privacy Engineering Program}, or
by contacting the project principal investigator,
\href{mailto:gary.howarth@nist.gov}{Gary Howarth}.

\hypertarget{is-there-an-erratum}{%
\subsubsection{Is there an erratum?}\label{is-there-an-erratum}}

There have been small updates to the meta-data data dictionary.json
files (for example, to improve clarity in descriptive strings for
features). The data are maintained in a public GIt repository and, thus,
all changes to the data are recorded in a public ledger.

\hypertarget{will-the-dataset-be-updated-e.g.-to-correct-labeling-errors-add-new-instances-delete-instances}{%
\subsubsection{Will the dataset be updated (e.g., to correct labeling
errors, add new instances, delete
instances)?}\label{will-the-dataset-be-updated-e.g.-to-correct-labeling-errors-add-new-instances-delete-instances}}

Since the data are excerpts from the 2019 release of the American
Community Survey, we do not expect any updates to labels or instances.
We do plan on one mayor updated version release in the future with the
following improvements: 
\begin{itemize}
\item \texttt{Household ID features}: Allows joins
between individuals in the same household 
\item \texttt{Individual ID}: Supports
reidentification research. 
\item \texttt{Training Data Partition}: Including
excerpts from 2018 for algorithm development/training and as a baseline
for reidentification studies 
\item \texttt{Large-sized low-diversity excerpt}: Our
current low-diversity excerpts, MA and TX, have much fewer records than
our high-diversity excerpt, National; this can be a confounding factor
for comparative analyses.
\end{itemize}

\hypertarget{if-the-dataset-relates-to-people-are-there-applicable-limits-on-the-retention-of-the-data-associated-with-the-instances-e.g.-were-individuals-in-question-told-that-their-data-would-be-retained-for-a-fixed-period-of-time-and-then-deleted}{%
\subsubsection{If the dataset relates to people, are there applicable
limits on the retention of the data associated with the instances (e.g.,
were individuals in question told that their data would be retained for
a fixed period of time and then
deleted)?}\label{if-the-dataset-relates-to-people-are-there-applicable-limits-on-the-retention-of-the-data-associated-with-the-instances-e.g.-were-individuals-in-question-told-that-their-data-would-be-retained-for-a-fixed-period-of-time-and-then-deleted}}

These data are in the public domain and as such there are no retention
limits.

\hypertarget{will-older-versions-of-the-dataset-continue-to-be-supportedhostedmaintained}{%
\subsubsection{Will older versions of the dataset continue to be
supported/hosted/maintained?}\label{will-older-versions-of-the-dataset-continue-to-be-supportedhostedmaintained}}

The data are maintained in a public Git repository and, thus, all changes
to the data are recorded in a public ledger. There are specific releases
in the repository that capture major data milestones.

\hypertarget{if-others-want-to-extendaugmentbuild-oncontribute-to-the-dataset-is-there-a-mechanism-for-them-to-do-so}{%
\subsubsection{If others want to extend/augment/build on/contribute to
the dataset, is there a mechanism for them to do
so?}\label{if-others-want-to-extendaugmentbuild-oncontribute-to-the-dataset-is-there-a-mechanism-for-them-to-do-so}}

We invite the public to use and build on these resources. First, these
resources are provided by NIST as a public service, and the public is
free to integrate these resources into their own work. Second, we invite
the public to raise issues in the dataset repository, allowing for a
transparent interaction. Individuals and groups wishing to make
substantial contributions are encouraged to contact the project
principal investigator, \href{mailto:gary.howarth@nist.gov}{Gary
Howarth}.

\hypertarget{any-other-comments-6}{%
\subsubsection{Any other comments?}\label{any-other-comments-6}}

No.

\section{Math Appendix}
\subsection{Proofs of Lemmas 2.1 and 2.2}
We introduced the concept of dispersal ratio in the main paper with the purpose of a giving the reader a clear and intuitive explanation of the term. In doing so, we omitted some formal results that might be interesting to examine in order to understand the mechanics behind dispersal ratio and independence. The perceptive reader may have noticed that we stated two lemmas in Section 2 without proving them. Recall the definition of dispersal ratio.
\begin{definition}
[Dispersal Ratio] Let the dispersal ratio for a population $P$ with the addition of feature $X$ be defined as 
$$Disperse(S, X, P) =  |bin_{(S+X)(P)}|  / |bin_{S(P)} |$$
\end{definition}
We begin by providing proofs of Lemma 2.1 (corresponding to Lemma C.1) and Lemma 2.2 (corresponding to Lemma C.2) as stated in Section 2. We follow it up with a result that may be of interest in Section C.3. These proofs follows the same framework and terminology as used in the main paper.

\begin{lemma}
An uncertainty coefficient of 1 is equivalent to a dispersal ratio of 1.
\[U(X|F) = 1 \iff Disperse(S, X, P) = 1\]  
\end{lemma}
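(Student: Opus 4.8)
The plan is to show that each side of the biconditional is separately equivalent to a single structural condition, namely that $X$ is a deterministic function of $F$ (equivalently, every non-empty $S$-bin contains individuals sharing one common value of $X$), and then to chain the two characterizations. Throughout I would assume $H(X) > 0$, so that $U(X|F)$ is well-defined; if $X$ were constant the coefficient would be the indeterminate $0/0$, a degenerate case lying outside the intended regime.

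First I would treat the information-theoretic side. Since $H(X) > 0$, the definition $U(X|F) = \bigl(H(X) - H(X|F)\bigr)/H(X)$ shows that $U(X|F) = 1$ holds if and only if $H(X|F) = 0$. I would then invoke the standard decomposition $H(X|F) = \sum_{f} p(f)\, H(X \mid F = f)$, in which every summand is nonnegative and $p(f) > 0$ exactly for the non-empty $S$-bins. Hence $H(X|F) = 0$ if and only if $H(X \mid F = f) = 0$ for each non-empty bin $f$, which occurs precisely when the conditional distribution of $X$ within each such bin is a point mass, i.e.\ when $X$ is determined by $F$.

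Next I would treat the combinatorial side by a direct bin-counting argument. Each non-empty $(S+X)$-bin projects onto a non-empty $S$-bin by forgetting the $X$-coordinate, and the number of $(S+X)$-bins lying over a fixed $S$-bin $b$ equals the number of distinct $X$-values realized by the individuals in $b$. This gives $|bin_{(S+X)(P)}| = \sum_{b \in bin_{S(P)}} (\text{number of distinct } X\text{-values in } b)$, so every term is at least $1$ and therefore $Disperse(S, X, P) \ge 1$ always. Consequently $Disperse(S, X, P) = 1$ if and only if each term equals $1$, that is, each non-empty $S$-bin exhibits a single $X$-value, which is again the condition that $X$ is determined by $F$. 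Chaining the two characterizations then yields $U(X|F) = 1 \iff Disperse(S, X, P) = 1$, passing through the common condition that $X$ be determined by $F$.

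This is not so much a hard theorem as a careful bookkeeping exercise, and the only delicate points are (i) excluding the degenerate $H(X) = 0$ case so that the ratio defining $U$ is meaningful, and (ii) making explicit that, under the empirical (observed) entropy, the values $f$ with $p(f) > 0$ are exactly the non-empty histogram bins, so that the information-theoretic statement ``$X$ is determined by $F$'' coincides verbatim with the combinatorial statement ``no $S$-bin splits upon adding $X$.'' Once these identifications are made precise, both equivalences are immediate and the lemma follows.
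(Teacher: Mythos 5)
Your proof is correct and is essentially the canonical argument for this statement: reducing both sides to the single structural condition that $X$ is determined by $F$ on the non-empty bins (i.e., $H(X|F)=0$ under the empirical distribution, equivalently no $S$-bin splits when $X$ is added), which is the route the paper's appendix proof takes as well. Your explicit handling of the degenerate $H(X)=0$ case and the identification of the empirical support with non-empty histogram bins are careful touches that the paper leaves implicit.
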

\begin{proof}
 \[U(X|F) = 1 \\
 \Rightarrow {\frac{H(X)-H(X|F)}{H(X)}} = 1 \\
 \Rightarrow H(X|F) = 0
 \] \\
Consider the following result. $H(X|F)=0$ if and only if $X$ is a function of $F$ i.e., $\forall f: p(f)>0$, there is only one possible value of $x$ with $p(x,f)>0$ \cite{cover1999elements}.

Let the function $g$ between $X$ and $F$ be denoted by $F = g(X)$. Applying the result here, let there be $m$ elements in the domain of $F$, which implies there can be no more than $m$ elements in the co-domain of $X$, to constitute a valid function. Let the elements in the range of $F$ be denoted by $f_1,f_2... f_m$, and that of $X$ be denoted by $x_1,x_2... x_{m'}$.\\
Since $X$ is a function of $F$, there exists only one element $x_i \in X$ corresponding to $f_j \in F$. \\
Thus, all bins in the schema $(S+x)$ can be denoted by $(f_i,x_i) = (f_i,g(f_i))$. Since there are $m$ bins, corresponding the size of the domain, in $F$, there will be exactly $m$ bins in $F' = (F,X)$. Therefore, 
\[|bin_{(S+X)(P)}| = |bin_{S(P)}|\]
\[\Rightarrow Disperse(S,X,P) = 1\]
Similarly, the converse of the lemma can be proved by taking the converse of the above result and considering that the inverse of the function $g' = g^{-1}(x)$ for $x$: $(F,X) \rightarrow F$ is uniquely defined if the dispersal ratio is 1.
\end{proof}

\begin{lemma}
An uncertainty coefficient of 0 leads to the maximum dispersal ratio.
\[U(X|F) = 0 \implies Disperse(S, X, P) = |Range(X)|\]
\end{lemma}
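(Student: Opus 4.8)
The plan is to unwind the information-theoretic hypothesis $U(X|F)=0$ into a statement about the joint distribution of $X$ and $F$, and then translate that distributional statement into a combinatorial count of occupied histogram bins. I would first observe that, by the definition of the uncertainty coefficient, $U(X|F)=0$ forces the numerator to vanish, i.e. $H(X|F)=H(X)$. Since $H(X)-H(X|F)$ is exactly the mutual information $I(X;F)$, this is equivalent to $I(X;F)=0$, which holds if and only if $X$ and $F$ are statistically independent. So the entire content of the hypothesis is that the added feature $X$ is independent of the existing feature set $F$.

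Next I would convert independence into a per-bin occupancy statement. Fix any nonempty bin of $S$, corresponding to a value $f$ of $F$ with $p(F=f)>0$, and any value $x\in Range(X)$, which (reading $Range(X)$ as the support) satisfies $p(X=x)>0$. Independence then gives $p(F=f,\,X=x)=p(F=f)\,p(X=x)>0$, so the refined cell $(f,x)$ in the schema $(S+X)$ has strictly positive probability. Invoking the standing assumption $|P|\gg|Range(F+X)|$, every positive-probability cell is actually realized in the empirical histogram, so each of the $|bin_{S(P)}|$ nonempty $S$-bins splits into exactly $|Range(X)|$ nonempty $(S+X)$-bins, and distinct $S$-bins never collide into the same refined bin. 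Counting then yields $|bin_{(S+X)(P)}|=|Range(X)|\cdot|bin_{S(P)}|$, and dividing gives $Disperse(S,X,P)=|Range(X)|$.

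The hard part will be the passage from the distributional condition to the count of occupied bins: the hypothesis $U(X|F)=0$ is a statement about probabilities, whereas the dispersal ratio counts nonempty bins in a finite empirical histogram, so one must argue that independence really does realize every combination $(f,x)$ of a realized $f$ with a realized $x$. This is precisely where the large-population assumption $|P|\gg|Range(F+X)|$ does the work, ruling out cells that carry positive probability yet happen to be empty in the sample. It is worth noting the symmetry with Lemma 2.1: there $U=1$ collapses each bin onto a single value of $X$ and produces no splitting, whereas here $U=0$ spreads every value of $X$ across every bin and produces maximal splitting, so together the two lemmas bracket the full range $1\le Disperse\le|Range(X)|$ that a single added feature can induce.
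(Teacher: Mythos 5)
Your proposal is essentially correct and takes the natural route the paper intends: $U(X|F)=0$ forces $H(X)=H(X|F)$, i.e.\ vanishing mutual information, hence independence of $X$ and $F$; independence then forces every refined cell $(f,x)$ over an occupied $S$-bin to be occupied, multiplying the occupied-bin count by $|Range(X)|$ and giving the claimed dispersal ratio. One correction about where the work lies, however: the step you single out as ``the hard part'' is mis-justified as written. The paper explicitly computes all entropies from the empirical (observed) histogram distribution, so $U(X|F)=0$ asserts that the \emph{empirical} joint distribution factorizes, $\hat{p}(f,x)=\hat{p}(f)\,\hat{p}(x)$. Under that convention, ``positive probability'' literally means ``at least one record in the cell,'' so the passage from the distributional statement to the count of occupied bins is immediate and requires no appeal to $|P|\gg|Range(F+X)|$. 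Conversely, if you insist on reading the probabilities as coming from an underlying true distribution rather than the sample, then the large-population assumption cannot close the gap you identify: a cell of positive but tiny probability can remain empty in any finite sample, so realization of every cell would hold only with high probability, not as a deterministic implication, and the lemma as stated would fail. So your argument stands, but only under the empirical reading of $U$ (which also resolves your parenthetical worry about $Range(X)$, since the empirical support is exactly the set of observed values); the assumption you lean on is doing no work.
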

\begin{proof}

\[U(X|F) = 0 
\implies {\frac{H(X)-H(X|F)}{H(X)}} = 0
\implies H(X|F) = H(X)
\] 
This implies $X$ and $F$ are independent observations \cite{cover1999elements}. Observe that the range of $Y = (X,F)$ can take maximum $ n_{max} = |Range(X)||Range(F)|$ values since it is the number of elements in $X \times F$. Note that $|Range(F)| = |bin_{S(P)}|$. Here, $Range(Y) = n_{max}$ due to the independence of $X$ and $F$ since
\[\forall x,f: Pr[Y = y] = (Pr[X=x] * Pr[F=f]) \neq 0\] 
As there are $n_{max}$ non-zero values for the probability distribution of $Y$, the size of the range of $Y$ is maximum. Note that $Y$ exactly expresses the distribution of values in the schema $(S+X)$. \\
 Therefore,
\[|bin_{(S+X)(P)}| = |bin_{S(P)}||Range(X)|\]
\[\implies Disperse(S,X,P) = |Range(X)|\]
which is the maximum dispersal ratio since $|bin_{S(P)}|*|Range(X)|$ was maximized.
\end{proof}

\subsection{Proofs of Theorems 2.3 and 2.4}
Here, we provide the detailed proofs of Theorem 2.3 (corresponding to Lemma C.3) and Theorem 2.4 (corresponding to Lemma C.4) as stated in Section 2. These proofs follows the same framework and terminology as used in the main paper.
\begin{theorem} Dispersal Ratio is bounded from above and below as function of the independence of the added feature as follows
\[\frac{|P|\cdot f(u)}{\log(|P|)|Range(F)|}\ge Disperse(S,X,P)\ge \frac{2^{f(u)}}{|Range(F)|}\]
where $f(u) := (1-u)H(X) + H(F)$ with $u = U(X|F)$.
\end{theorem}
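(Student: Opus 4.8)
The plan is to recognize that the quantity $f(u)$ is exactly the joint entropy $H(F,X)$, and then to sandwich the number of occupied bins of the refined schema, $|bin_{(S+X)(P)}|$, between two entropy-based estimates before dividing by $|bin_{S(P)}|$. First I would rewrite $f(u)$: since $U(X|F) = (H(X)-H(X|F))/H(X)$, we have $H(X|F) = (1-u)H(X)$, so by the chain rule $H(F,X) = H(F) + H(X|F) = H(F) + (1-u)H(X) = f(u)$. Thus the theorem is really a pair of bounds on $|bin_{(S+X)(P)}|$ in terms of $H(F,X)$. Throughout I would fix the base-$2$ convention for entropy, which is forced by the appearance of $2^{f(u)}$ and is consistent with the $\log(|P|)$ factor below.

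For the lower bound I would use the elementary fact that a distribution supported on $N$ outcomes has entropy at most $\log_2 N$. Applied to the joint histogram this gives $f(u) = H(F,X) \le \log_2 |bin_{(S+X)(P)}|$, which rearranges to $|bin_{(S+X)(P)}| \ge 2^{f(u)}$. Since the number of occupied base bins never exceeds the total number of feature combinations, $|bin_{S(P)}| \le |Range(F)|$, dividing yields $Disperse(S,X,P) = |bin_{(S+X)(P)}|/|bin_{S(P)}| \ge 2^{f(u)}/|Range(F)|$. This direction is clean and, reassuringly, is tight exactly in the independent–uniform case ($u=0$, all bins equiprobable), where it recovers Lemma 2.2.

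For the upper bound I would exploit the sampling resolution: every occupied bin contains at least one of the $|P|$ individuals, so its probability is at least $1/|P|$. The entropy of a histogram with $n$ occupied bins is minimized when the mass is maximally concentrated — $n-1$ bins pinned at the floor $1/|P|$ and the remainder piled into one bin — which yields $H(F,X) \ge (n-1)\cdot\tfrac{1}{|P|}\log_2|P|$, the leftover-mass term being nonnegative and hence droppable. Taking $n = |bin_{(S+X)(P)}|$ and solving gives $|bin_{(S+X)(P)}| \le 1 + |P|\,f(u)/\log_2|P|$, where the additive $1$ is negligible under the standing hypothesis $|P| \gg |Range(F+X)|$. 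Finally the same hypothesis forces the coarse schema to be fully occupied, $|bin_{S(P)}| = |Range(F)|$, so dividing produces $Disperse(S,X,P) \le |P|\,f(u)/(\log_2|P|\,|Range(F)|)$.

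The main obstacle is the upper bound, and within it two points need care. The first is the extremal claim that concentrating mass minimizes entropy subject to the per-bin floor $1/|P|$ and fixed occupancy; I would justify this rigorously via Schur-concavity of entropy (or a direct exchange argument on $p\log(1/p)$) rather than merely asserting it. The second, more conceptual, point is that the claimed bound carries $|Range(F)|$ in the denominator rather than the true occupancy $|bin_{S(P)}|$: because $|bin_{S(P)}| \le |Range(F)|$ always holds, the inequality is only valid once full occupancy of the coarse schema is established, so the $|P| \gg |Range(F+X)|$ assumption must be invoked \emph{explicitly} to equate the two. The lower bound, by contrast, requires only the always-true inequality $|bin_{S(P)}| \le |Range(F)|$ and so needs no occupancy hypothesis.
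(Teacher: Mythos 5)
Your proof is correct and follows essentially the same route as the paper's: identifying $f(u)$ with the joint entropy $H(F,X)$ via the chain rule, applying the max-entropy bound $H(F,X)\le \log_2(\mbox{support size})$ to get the lower bound on the dispersal ratio, and lower-bounding the observed entropy via the $1/|P|$ per-bin probability floor (each record contributing discretely to the histogram) to get the upper bound. If anything, you are more explicit than the paper's sketch about where the standing assumption $|P| \gg |Range(F+X)|$ must be invoked, namely to equate $|bin_{S(P)}|$ with $|Range(F)|$ in the upper bound and to absorb the additive constant.
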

\begin{proof}
Lemma 2.1 and Lemma 2.2 give an upper and lower bound for the dispersal ratio which is
\[1 \leq Disperse(S,X,P) \leq |Range(X)|\]
corresponding to
\[1 \geq U(X|F) \geq 0 \]

Independence is quantified through the uncertainty coefficient $U(X|F)$. Recall that as $U$ decreases, independence increases, and vice-versa. 

Let some arbitrary $u = U(X|F)$. From the definition of $U(X|F)$,
\[u = {\frac{H(X)-H(X|F)}{H(X)}} \Rightarrow H(X|F) = (1-u)H(X)\]
Rewriting in terms of the joint entropy \cite{cover1999elements} $H(X,F)$,
\begin{equation}
H(X,F) = (1-u)H(X) + H(F)
\end{equation}
Applying a well-known upper bound on entropy \cite{cover1999elements}, and substituting in equation (1),
\begin{equation}
H(X,F) \le \log_2(|Range(X,F)|) \implies |Range(X,F)| \ge 2^{(1-u)H(X) + H(F)}
\end{equation}

Following from our definition of dispersal ratio,
\begin{equation}
Disperse(S,X,P) = \frac{|bin_{(S+X)(P)}|}{|bin_{(S)(P)}|} = \frac{|Range(X,F)|}{|Range(F)|}
\end{equation}
Thus,
\begin{equation}
Disperse(S,X,P)\ge \frac{2^{(1-u)H(X) + H(F)}}{|Range(F)|}
\end{equation}

Now, from the definition of entropy,
\[ \displaystyle \mathrm {H} (X,F) =-\sum _{x\in {\mathcal {(X \times  F)}}}p(x)\log p(x)\]
Since each bin, corresponding to $x$ in the above equation, must have at least one person in order to contribute to the entropy, $p(x) \ge \frac{1}{|P|}$ where $|P|$ is the size of the population. To analyze any arbitrary distribution, we can first allocate one person to each bin in the range, by definition of range. Now, we have to distribute $|P| - |Range(X,F)|$ people in $|Range(X,F)|$ bins. Entropy is minimized when all the rest of the people are put in one bin. This distribution gives a lower bound for entropy in terms of the range of $(X,F)$. Formally, we get the following inequality,

\begin{equation}
\scalebox{0.95}{
$ 
{H} (X,F) \ge (|Range(X,F)|-1) \left[ \frac{\log(|P|)}{|P|} \right] \\ + \frac{|P|-(|Range(X,F)|-1)}{|P|}\log\left(\frac{|P|}{|P|-(|Range(X,F)|-1)}\right)
$ 
}
\end{equation}

Since $|P| >> |Range(X,F)| >> 1$, the second term is neglected as $1 \cdot \log(\frac{|P|}{|P|}) = 0$,
\[\implies \displaystyle \mathrm {H} (X,F) \ge |Range(X,F)| \left[ \frac{\log(|P|)}{|P|} \right]\]

Substituting in equation (1) and rearranging terms, we get
\begin{equation}
Disperse(S,X,P) \le \frac{|P| \cdot ((1-u)H(X) + H(F))}{\log(|P|)|Range(F)|}
\end{equation}

We can improve on our trivial bounds of 1 and $|Range(X)|$, from Lemma 2.1 and Lemma 2.2, for the dispersal ratio corresponding to a given $u$. Combining these two results with the improved upper and lower bounds from equation (3) and equation (5),
\begin{equation}
\scalebox{0.95}{
$
\min\left\{\frac{|P| \cdot ((1-u)H(X) + H(F))}{\log(|P|)|Range(F)|},|Range(X)|\right\}\ge 
\\
Disperse(S,X,P) \ge \max\left\{\frac{2^{(1-u)H(X) + H(F)}}{|Range(F)|},1\right\} 
$
}
\end{equation}

From equation (1), we can write $H(X,F)$ as a function of $u$ i.e., $f(u) := (1-u)H(X) + H(F)$. Also note that as $u$ increases, $f(u)$ decreases and vice-versa. This gives a simpler form for our above equation (also considering only non-trivial bounds),
\begin{equation}
\frac{|P|\cdot f(u)}{\log(|P|)|Range(F)|}\ge Disperse(S,X,P)\ge \frac{2^{f(u)}}{|Range(F)|}
\end{equation}
\end{proof}
This result shows that, for some fixed value of entropy of the added feature, the non-trivial upper and lower bounds for the dispersal ratio decrease as the uncertainty coefficient increases, and vice-versa, according to the described behaviour of $f(u)$.

Now, we want to compare the effect of adding a new feature $X_1$ or $X_2$ to the schema. Let us assume that $X_1$ is more "independent" than $X_2$ of the distribution of $P$ in $S$ (note that this is equivalent to considering a single feature $X$ and two diverse subpopulations $P1, P2$ with differing relationships to $X$). We can then say that the uncertainty coefficient $u_1 = U(X_1|F)$ is lesser for $X_1$ as compared to $u_2$ corresponding to $X_2$. We can use our results from the above theorem to make this comparison as follows in Theorem 2.4. We define a couple of terms first for ease of notation.

Let the non-trivial lower bound for the dispersal ratio (>1) on adding feature $X$ be denoted as
\begin{equation}
LB(Disperse(S,X,P)) = \frac{2^{(1-u)H(X) + H(F)}}{|Range(F)|}
\end{equation}

Let the non-trivial upper bound for the dispersal ratio (<$|Range(X)|$) on adding feature $X$ be denoted
\begin{equation}
UB(Disperse(S,X,P)) = \frac{|P|\cdot (1-u)H(X) + H(F)}{\log(|P|)|Range(F)|}
\end{equation}

\begin{theorem} Consider two features $X_1$ and $X_2$, identical in terms of entropy, that can be added to the schema. If $X_1$ has higher independence than $X_2$ with respect to $F$, it is equivalent to $X_1$ having a higher LB and higher UB for the dispersal ratio.
\[U(X_1|F) \le U(X_2|F) \iff LB(Disperse(S,X_1,P) \ge LB(Disperse(S,X_2,P) \]
\[U(X_1|F) \le U(X_2|F) \iff UB(Disperse(S,X_1,P) \ge UB(Disperse(S,X_2,P)\]
\end{theorem}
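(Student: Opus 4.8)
The plan is to notice that, once the hypotheses are imposed, both bounds depend on the added feature \emph{only} through the scalar $u = U(X\mid F)$, so each biconditional collapses to a one-variable monotonicity statement. Writing $f(u) := (1-u)H(X) + H(F)$ exactly as in Theorem 2.3, the assumption that $X_1$ and $X_2$ share the same entropy gives $H(X_1) = H(X_2) =: H(X)$, while fixing the schema $F$ and the population $P$ makes $H(F)$, $|Range(F)|$, and $|P|$ common to both features. Consequently the two lower bounds (and the two upper bounds) differ \emph{only} in the value of $u$ substituted into $f$.

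First I would record that $f$ is strictly decreasing in $u$: it is affine with slope $-H(X)$, and for any informative feature $H(X) > 0$, so
\[ u_1 \le u_2 \iff f(u_1) \ge f(u_2). \]
Next I would push this equivalence through each bound separately. For the lower bound, $LB(Disperse(S,X,P)) = 2^{f(u)}/|Range(F)|$ is a strictly increasing function of $f(u)$, since $t \mapsto 2^{t}$ is increasing and $|Range(F)| > 0$; composing with the display above yields the first claimed equivalence. For the upper bound, $UB(Disperse(S,X,P)) = |P|\,f(u)/(\log(|P|)|Range(F)|)$ is $f(u)$ scaled by the strictly positive constant $|P|/(\log(|P|)|Range(F)|)$ — positive because $|P| > 1$ forces $\log(|P|) > 0$ — hence again strictly increasing in $f(u)$, giving the second equivalence by the same composition.

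Because every link in this chain is a strictly monotone equivalence, there is no substantial obstacle to the argument; the only point demanding care is the degenerate case $H(X) = 0$, where $f$ is constant in $u$ and both bounds coincide irrespective of the uncertainty coefficients. In that case the equivalence breaks, since the right-hand inequalities hold vacuously while $u_1 \le u_2$ need not, so the statement should be read under the standing assumption that the added feature carries strictly positive entropy, which is implicit in the setup.
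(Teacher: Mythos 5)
Your proof is correct and follows essentially the same route as the paper, which derives Theorem 2.4 directly from Theorem 2.3 by observing that both bounds are strictly monotone functions of $f(u) = (1-u)H(X) + H(F)$, which is itself strictly decreasing in $u$ when the two features share a common entropy. Your explicit caveat that the biconditional fails in the degenerate case $H(X)=0$ is a worthwhile refinement the paper leaves implicit.
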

\begin{proof}
Consider a population $P$ distributed in a table-based partitioned schema $S$. Note that higher independence corresponds to a lower uncertainty coefficient.
Let $u_1 = U(X_1|F)$, $u_2 = U(X_2|F)$ and $H(X) = H(X_1) = H(X_2)$. The following result can be derived from Theorem 2.3. Let us consider the lower bound first. From equation (9),

\[\frac{LB((Disperse(S,X_1,P))}{LB((Disperse(S,X_2,P))} = \frac{\frac{2^{(1-u_1)H(X) + H(F)}}{|Range(F)|}}{\frac{2^{(1-u_2)H(X) + H(F)}}{|Range(F)|}}
\implies \frac{LB((Disperse(S,X_1,P))}{LB((Disperse(S,X_2,P))} = 2^{H(X)(u_2-u_1)}\]

Since entropy is always greater than or equal to 0 , $H(X) \ge 0$ \cite{cover1999elements}. Thus,
\[LB(Disperse(S,X_1,P) \ge LB(Disperse(S,X_2,P) \iff u_2 - u_1 \ge 0
\iff u_1 \le u_2\]

Similarly for upper bound, from equation (10), we get
\[\frac{UB((Disperse(S,X_1,P))}{UB((Disperse(S,X_2,P))} = \frac{(1-u_1)H(X) + H(F)}{(1-u_2)H(X) + H(F)}\]
Clearly,
\[UB(Disperse(S,X_1,P) \ge UB(Disperse(S,X_2,P) \iff (1-u_1)H(X) + H(F) \ge (1-u_2)H(X) + H(F)\]
\[\iff u_1 \le u_2\]
\end{proof}

\subsection{Additional Material}
We now show an interesting consequence of the relation between dispersal ratio and the initial population. The following lemmas prove that a small population size can lead to small cell counts.

Consider a population $P$ with a sub-population $P_1$, distributed in a table-based partitioned schema. Consider an individual $i \in P_1$, who gets placed in a bin under schema $S$. We denote the size of that bin as $size(bin_{S(i)})$. Let a feature $f$ be added to the schema.

\begin{lemma}
The dispersal ratio is always greater than or equal to 1. 
\[Disperse(S, f, P_1) \ge 1\]
\end{lemma}
\begin{proof}
Consider an arbitrary $bin_S(i)$ in the schema $S$ with the $m$ features in the feature set $f_1,f_2,f_3...f_m$. 

Adding a new feature $f$ to the schema $S$ with feature values (say) in the set $V= \{v_1,v_2\}$ will subdivide all records in $f_1,f_2,f_3...f_m$ into  $f_1,f_2,f_3...f_m,v_1$ and $f_1,f_2,f_3...f_m,v_2$, by the definition of partitioning.

$bin_S(i)$ in the schema $S$ will be replaced by at least one bin or more, in the schema $(S+f)$. Thus, the dispersal ratio for the sub-population of $bin_{S(i)}: i \in {P_1}$ is always greater than 1.

Since for each disjoint sub-population corresponding to each bin $\in S$, this ratio is greater than one, the dispersal ratio for the overall population $P_1$ over the schema $S$ and adding a new feature $f$, is also greater than 1.
\end{proof}

\begin{definition}
[Average bin size for population $P_1$] It is defined as 
\[\left[\sum_{S(i): i \in P_1}size(bin_{S(i)})\right]/{|bin_{S(P_1)}|}\]
\end{definition}

\begin{lemma}
If a new feature f is added to the schema denoted by $S+f$, then the average bin size will stay the same or decrease.
\[\left[\sum_{S(i): i \in P_1}size(bin_{S(i)})\right]/{|bin_{S(P_1)}|} \ge \left[\sum_{(S+f)(i): i \in P_1}size(bin_{(S+f)(i)})\right]/{|bin_{(S+f)(P_1)}|}\]
\end{lemma}
\begin{proof}
For each of the disjoint partitions of some $S(i): i \in P_1$, records of the form $i \in P_1$ do not get merged with any records that were not in the initial bin $S(i)$, by definition of partitioning. Thus, summing over all such bins,
\begin{equation}
\left[\sum_{S(i): i \in P_1}size(bin_{S(i)})\right] \ge \left[\sum_{(S+f)(i): i \in P_1}size(bin_{(S+f)(i)})\right]    
\end{equation}
Note that there is a '$\ge$' inequality since there may be bins in the schema $S+f$ that do contain records of the form $i \in P_1$, which were previously grouped with records $i \in P_1$ in the schema $S$. From Lemma C.3, if the dispersal ratio for population $P_1$ is $r_1$, then $r_1\geq1$, which implies

\begin{equation}
|bin_{S(P_1)}| \leq |bin_{(S+f)(P_1)}|    
\end{equation}
Combining equations (1) and (2) proves our result, by observing that they are the numerator and denominator respectively of our desired inequality.
\end{proof}

Assume two sub-populations $P_0$ and $P_1$ are distributed in the same arbitrary number of bins $|bin_{S(P_1)}| = |bin_{S(P_0)}| = m$. If on adding a feature $f$, $P_0$ and $P_1$ have the same dispersal ratio ($r_0 = r_1 = r'$), then $|bin_{(S+f)(P_1)}| = |bin_{(S+f)(P_0)}| = mr'$. The ratio of their average bin sizes for the schema $S+f$ is
$$\frac{\frac{\left[\sum_{S+f}size(bin_{(S+f)(i)})_{P_1}\right]}{mr'}}{\frac{\left[\sum_{S+f}size(bin_{(S+f)(i)})_{P_0}\right]}{mr'}} $$
The average bin size is directly correlated to the size of the sub-population for the same initial number of bins and the same dispersal ratio. Therefore, if one subgroup (say $P_0$) is smaller than the other ($P_1$), then the average bin size for $P_0$ is less than that of $P_1$. 

As the average bin size drops for members of a sub-population, the utility will also drop monotonically for partition-based algorithms.










\section{Feature Definitions and Recommended Subsets}
Figure \ref{fig:features} lists the 24 Excerpts features.  The majority are from the 2019 American Community Survey Public Use Micodata; four of them (DENSITY, INDP\_CAT, EDU, PINCP\_DECILE) were derived from ACS features or public data as described in \ref{preprocessingcleaninglabeling}. Along with feature type, we've included cardinality (number of possible values). Because some deidentification algorithms require small feature spaces, the NIST CRC program recommends three smaller feature subsets: Demographic-focused, Industry-focused and Family-focused.  Each subset showcases different feature mechanics, while sharing common features to delineate subpopulations (SEX, MSP, RAC1P, OWN\_RENT, PINCP\_DECILE).

\begin{figure*}[h!]
\centering
        \centering
        \includegraphics[width=12cm]{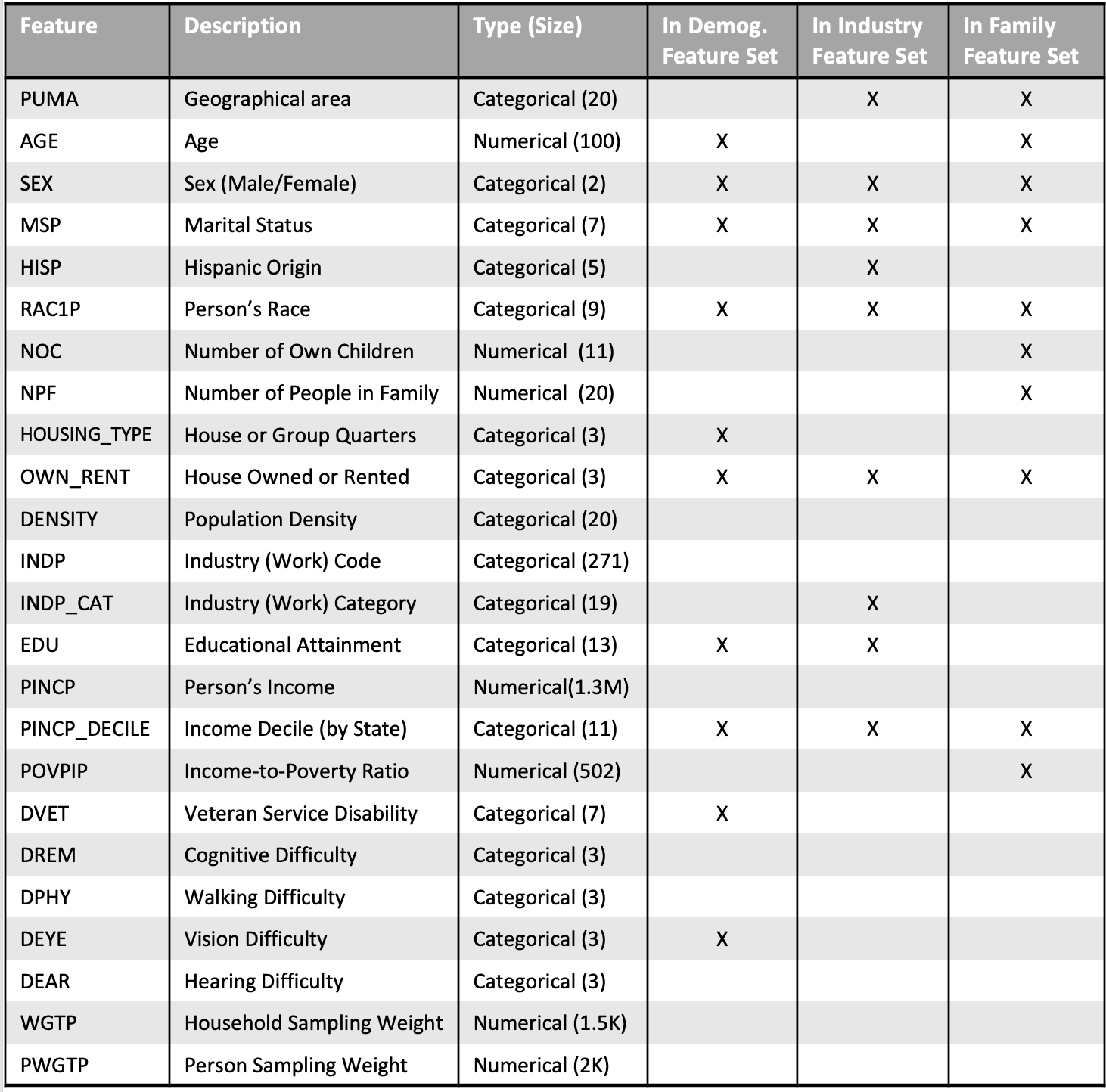}  
        \caption{The 24 Features in the Excerpts, and recommended feature subsets.}
        \label{fig:features}
    \end{figure*}

\newpage
\section{Detailed Evaluation Reports and Metadata on Selected Deidentified Data Samples}
As we noted in the main paper, the \href{https://github.com/usnistgov/privacy_collaborative_research_cycle/}{NIST CRC Data and Metrics Bundle} is an archive of 300 deidentified data samples and evaluation metric results.  To demonstrate the efficacy of the Excerpts for identifying and diagnosing behaviors of deidentification algorithms on diverse populations, we selected seven algorithms from the archive to showcase in the paper.  Below we provide the complete meta-data and highlighted principal component analysis (PCA) plot for each sample, as well as links to their detailed evaluation reports (available online in the sample report section of the \href{https://github.com/usnistgov/SDNist/tree/main/sdnist/report/sample-reports}{SDNist repository}). 

Each detailed evaluation report contains the metrics listed below, along with complete results, detailed metric definitions accessible to non-technical stakeholders, a human-readable data dictionary, and additional references. 
\\

\textbf{SDNist Detailed Report Metrics List:} 
\begin{itemize}
\item K-marginal Edit Distance 
\item K-marginal Subsample Equivalent
\item K-marginal PUMA-specific Score 
\item Univariate Distribution Comparison
\item Kendall Tau Correlation Differences
\item Pearson Pairwise Correlation Differences 
\item Linear Regression (EDU vs PINCP\_DECILE), with Full 16 RACE + SEX Subpopulation Breakdowns
\item Propensity Distribution 
\item Pairwise Principle Component Analysis (Top 5) 
\item Pairwise PCA (Top 2, with MSP = `N' highlighting)
\item Inconsistencies (Age-based, Work-based, Housing-based)
\item Worst Performing PUMA Breakdown (Univariates and Correlations)
\item Privacy Evaluation: Unique Exact Match Metric
\item Privacy Evaluation: Apparent Match Metric 
\end{itemize}

\newpage
\subsection{Selected Algorithm Deidentified Data Summary Table}
For convenience, we include the deidentified data summary table from the main paper.  Expanded results for each algorithm are provided in sections E3-E10) 

\begin{table}[h!]
\centering
\resizebox{\textwidth}{!}{%
\begin{tabular}{|l|r|r|r|r|r|}
\hline
\\[-1em]
\multirow{2}{*}{} \textbf{Library and Algorithm } & \textbf{Privacy Type} & \textbf{Algorithm Type} & \textbf{Priv. Leak (UEM)} & \textbf{Utility (ES)} \\\hline
\\[-1em]
DP Histogram ($\epsilon = 10$) & differential privacy (DP) & simple histogram & 100 \% & $\sim 90$ \% (988)\\\hline
\\[-1em]
\multirow{2}{*}{}R synthpop CART model \cite{synthpop} & non-DP synthetic data & \shortstack{multiple imputation\\ decision tree} & 2.54 \% & $\sim$40 \% (935)\\\hline
\\[-1em]
\multirow{2}{*}{}MOSTLY AI SDG \cite{MostlyAI} \cite{platzer2022rule} & non-DP synthetic data & \shortstack{proprietary pre-trained\\ neural network} & 0.03 \% & $\sim$30 \% (921)\\\hline
\\[-1em]
\multirow{2}{*}{}SmartNoise MST ($\epsilon = 10$) \cite{mst} & DP & \shortstack{probabilistic graphical\\ model (PGM)} & 13.6 \% & $=10$\% (969)\\\hline
\\[-1em]
\multirow{2}{*}{}SDV CTGAN \cite{SDV} \cite{ctgan} & non-DP synthetic data & \shortstack{generative adversarial\\ network (GAN)} & 0.0 \% & $\sim$5 \% (775)\\\hline
\\[-1em]
SmartNoise PACSynth ($\epsilon = 10$) \cite{pacsynth} & DP + $k$-anonymity & constraint satisfaction & 0.87 \% & $\sim$1 \% (551)\\\hline
\\[-1em]
synthcity ADSGAN \cite{ads} \cite{synthcity}& custom noise injection & GAN & 0.0 \% & $<1$\% (121)\\\hline
\end{tabular}}
\caption{\label{tab:summary2}Summary of selected deidentification algorithms. Unique Exact Match (UEM) is a simple privacy metric that counts the percentage of singleton records in the target that are also present in the deidentified data; these uniquely identifiable individuals leaked through the deidentification process. The Equivalent Subsample (ES) utility metric uses an analogy between deidentification error and sampling error to communicate utility; a score of 5 \% indicates the edit distance between the target and deidentified data distributions is similar to the sampling error induced by randomly discarding 95 \% of the data. Edit distance is based on the k-marginal metric for sparse distributions. \cite{Abowd2021}, \cite{lothe_sdnist_2021}}
\label{deid_table}
\end{table}

\newpage

\subsection{Expanded Data for Subgroup Dispersal Line Graphs}
\begin{center}

\includegraphics[width=12cm]{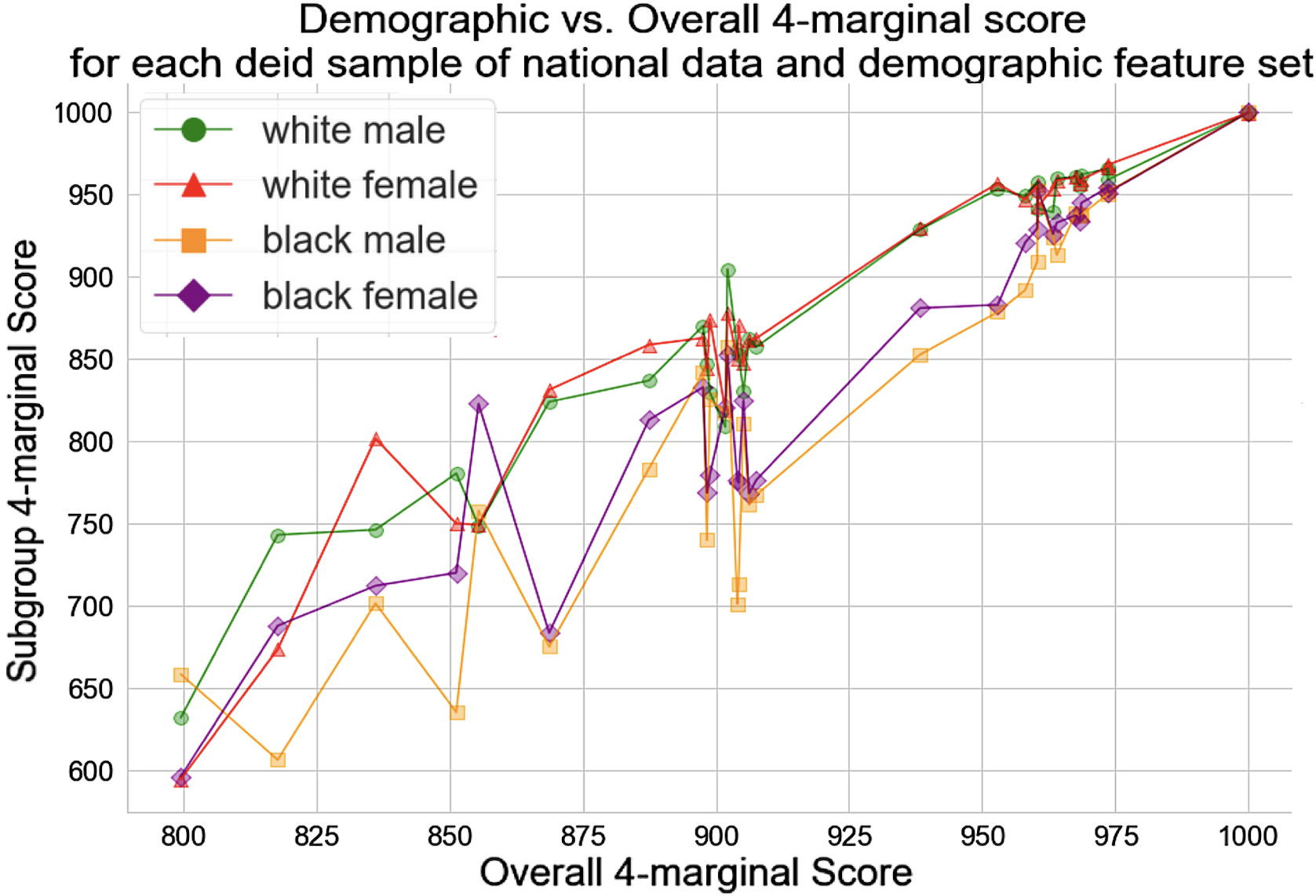}
~

\includegraphics[width=12cm]{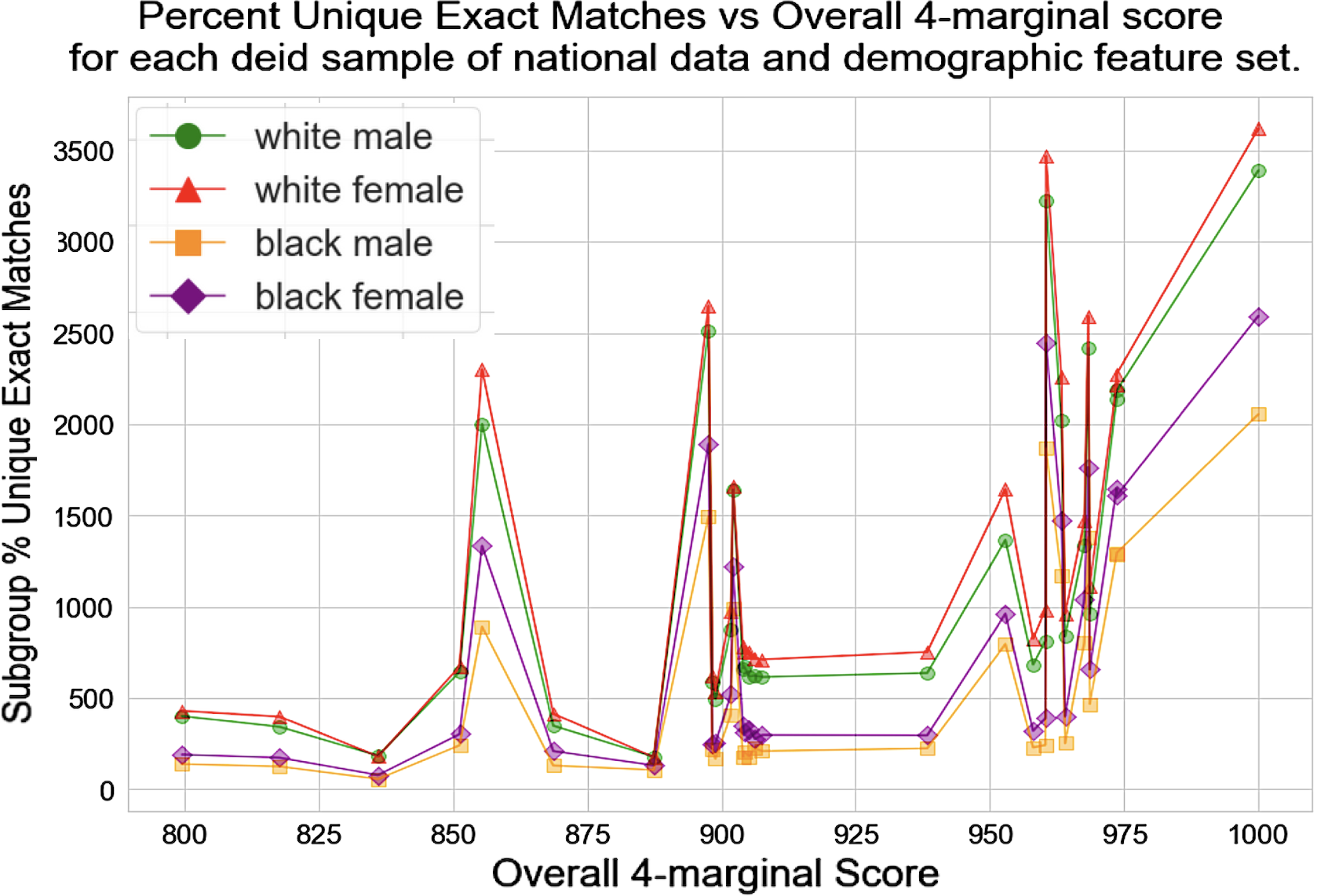}
\newpage

\begin{table}
    \begin{adjustwidth}{-10em}{}
    \centering
    \small    
    \csvautotabular{figures/AppendixFigs/kmarg_demographic_uem_sorted.csv}
    \end{adjustwidth}
    \caption{This table contains the specific deidentified data samples and metric results used to generate Figures 3 and 4 in the main paper.}
    \label{tab:my_label}
\end{table}

    
\end{center}

************************************************************
\newpage
\subsection{Differentially Private Histogram (epsilon-10)}
A differentially private histogram is a naive solution that simply counts the number of occurrences of each possible record value, and adds noise to the counts.  We use the Tumult Analytics library to efficiently produce a DPHistogram with a very large set of bins.  Epsilon 10 is a very weak privacy guarantee, and this simple algorithm provides very poor privacy in these conditions. The points in the 'deidentified' PCA are nearly the exact same points as in the target PCA. 

The full metric report can be found \href{https://htmlpreview.github.io/?https://github.com/usnistgov/SDNist/blob/main/sdnist/report/sample-reports/report_dphist_e_10_cf8_na2019_05-19-2023T18.01.12/report.html}{here}.

\begin{figure*}[h!]
    \begin{subfigure}[h!]{0.5\textwidth}
        \centering
        \includegraphics[width = \linewidth]{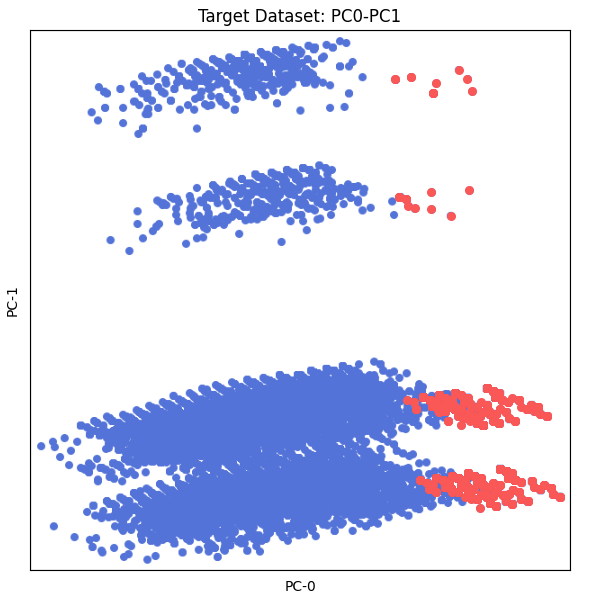} 
        \caption{Target data (Demographic-focused Feature Subset, excepting AGE and DEYE)}
        \label{Tumult1}
    \end{subfigure}
-
    \begin{subfigure}[h!]{0.5\textwidth}
        \centering
        \includegraphics[width = \linewidth]{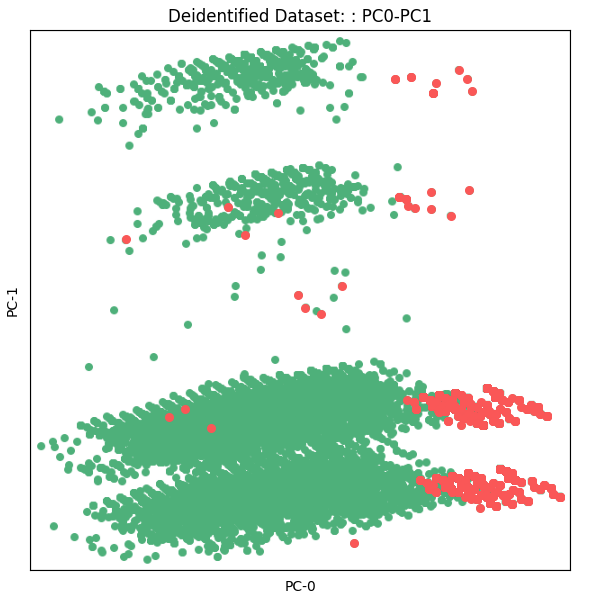}  
        \caption{Deidentified data}
        \label{Tumult2}
    \end{subfigure}
    \caption{The PCA Metric for DP Histogram ($\epsilon = 10$) }
\end{figure*}

\begin{table}[h!]
\centering
\begin{tabular}{l|l}
Label Name & Label Value\\\hline
Algorithm Name & DPHist\\
Library & Tumult Analytics\\
Feature Set & demographic-focused-except-AGEP-DEYE \\
Target Dataset & national2019\\
Epsilon & 10 \\
Privacy & differential privacy\\
Filename & dphist\_e\_10\_cf8\_na2019\\
Records & 27314\\
Features & 8\\
Library Link & \href{https://docs.tmlt.dev/analytics/latest/}{https://docs.tmlt.dev/analytics/latest/}
\end{tabular}
\caption{\label{tab:dphist-label}Label Information for Differential Private Histogram (epsilon-10)}
\end{table}

\newpage

\subsection{SmartNoise PACSynth (epsilon-10, Industry-focused)}
We've included two samples from the PACSynth library to showcase its behavior on different feature subsets.  The technique provides both differential privacy and a form of k-anonymity (removing rare outlier records).  This provides very good privacy, Table \ref{deid_table}, but it can also erase dispersed subpopulations. The industry feature subset below was used for the regression metric in the main paper, which showed erasure of graduate degree holders among both white men and black women.  

More information on the technique can be found \href{https://pages.nist.gov/privacy_collaborative_research_cycle/pages/techniques.html#smartnoise-pacsynth}{here}.  The full metric report can be found \href{ https://htmlpreview.github.io/?https://github.com/usnistgov/SDNist/blob/main/sdnist/report/sample-reports/report_pac_synth_e_10_industry_focused_na2019_05-19-2023T18.01.12/report.html}{here}.

\begin{figure*}[h!]
    \begin{subfigure}[h!]{0.5\textwidth}
        \centering
        \includegraphics[width = \linewidth]{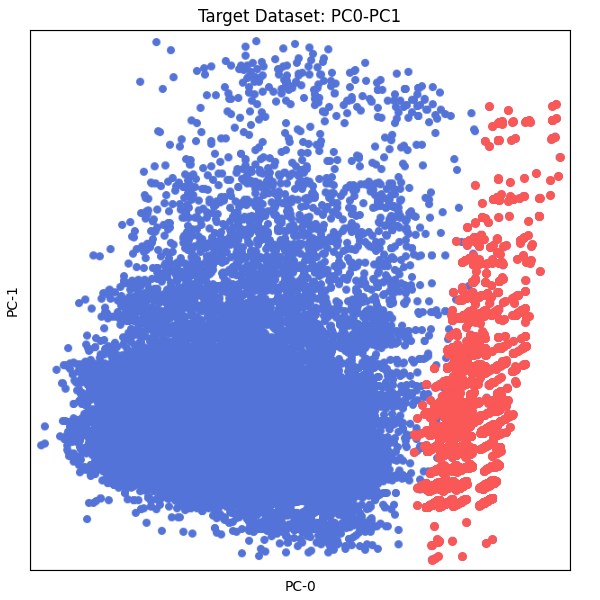} 
        \caption{Target data (Industry-focused Feature Subset)}
        \label{PACSynth1}
    \end{subfigure}
-
    \begin{subfigure}[h!]{0.5\textwidth}
        \centering
        \includegraphics[width = \linewidth]{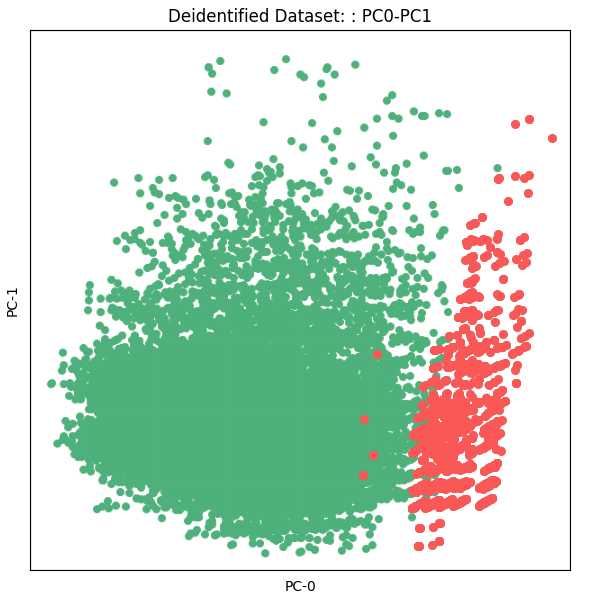}  
        \caption{Deidentified data}
        \label{PACSynth2}
    \end{subfigure}
    \caption{The PCA Metric for PACSynth ($\epsilon = 10$) }
\end{figure*}

\begin{table}[h!]
\centering
\begin{tabular}{l|l}
Label Name & Label Value\\\hline
Algorithm Name & pacsynth\\
Library & smartnoise-synth\\
Feature Set & industry-focused\\
Target Dataset & national2019\\
Epsilon & 10 \\
Variant Label & preprocessor-epsilon: 3\\
Privacy & Differential Privacy\\
Filename & pac\_synth\_e\_10\_industry\_focused\_na2019\\
Records & 29537\\
Features & 9\\
Library Link & \href{https://github.com/opendp/smartnoise-sdk/tree/main/synth}{https://github.com/opendp/smartnoise-sdk/tree/main/synth}
\end{tabular}
\caption{\label{tab:pacsynth-label}SmartNoise PACSynth (epsilon-10)}
\end{table}

\newpage
\subsection{SmartNoise PACSynth (epsilon-10), Family-focused)}
On the family-focused feature subset we can see the impact of the k-anonymity protection more dramatically.  Because the deidentified data with removed outliers have reduced diversity, it occupies a much smaller area in the plot as compared to the target data. The deidentified records are concentrated into fewer, more popular feature combinations and, thus, their points show less variance along the PCA axes. 

More information on the technique can be found \href{https://pages.nist.gov/privacy_collaborative_research_cycle/pages/techniques.html#smartnoise-pacsynth}{here}.
The full metric report can be found \href{https://htmlpreview.github.io/?https://github.com/usnistgov/SDNist/blob/main/sdnist/report/sample-reports/report_pac_synth_e_10_family_focused_na2019_05-19-2023T18.01.12/report.html}{here}.

\begin{figure*}[h!]
    \begin{subfigure}[h!]{0.5\textwidth}
        \centering
        \includegraphics[width = \linewidth]{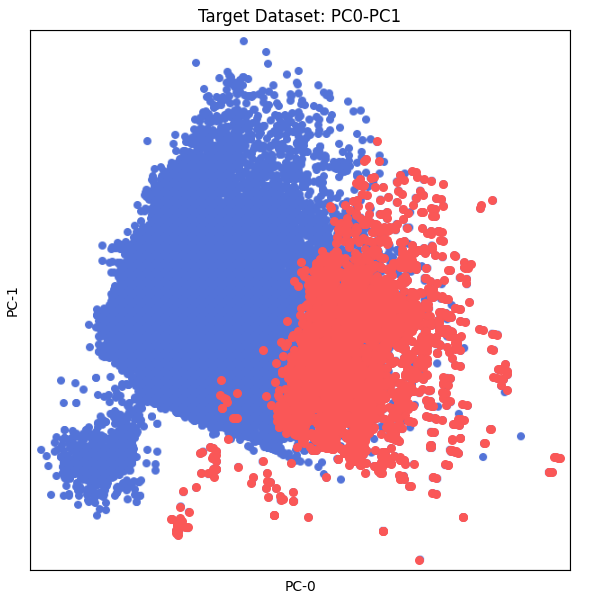} 
        \caption{Target data (Family-focused Feature Subset)}
        \label{PACSynth1a}
    \end{subfigure}
-
    \begin{subfigure}[h!]{0.5\textwidth}
        \centering
        \includegraphics[width = \linewidth]{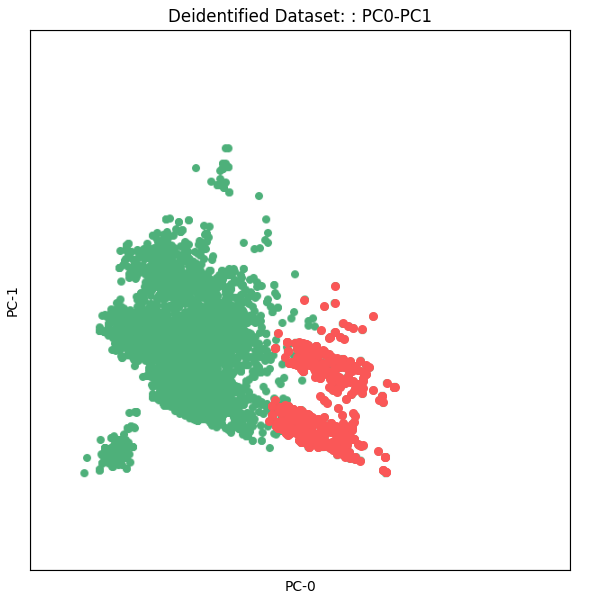}  
        \caption{Deidentified data}
        \label{PACSynth2a}
    \end{subfigure}
    \caption{The PCA Metric for PACSynth ($\epsilon = 10$) }
\end{figure*}

\begin{table}[h!]
\centering
\begin{tabular}{l|l}
Label Name & Label Value\\\hline
Algorithm Name & pacsynth\\
Library & smartnoise-synth\\
Feature Set & family-focused\\
Target Dataset & national2019\\
Epsilon & 10 \\
Variant Label & preprocessor-epsilon: 3\\
Privacy & differential privacy\\
Filename & pac\_synth\_e\_10\_industry\_focused\_na2019\\
Records & 29537\\
Features & 9\\
Library Link & \href{https://github.com/opendp/smartnoise-sdk/tree/main/synth}{https://github.com/opendp/smartnoise-sdk/tree/main/synth}
\end{tabular}
\caption{\label{tab:pacsynth-labela}SmartNoise PACSynth (epsilon-10)}
\end{table}

\newpage
\subsection{SmartNoise MST (epsilon-10)}
The MST synthesizer uses a probabilistic graphical model (PGM), with a maximum spanning tree (MST) structure capturing the most significant pair-wise feature correlations in the ground truth data as noisy marginal counts. This solution was the winner of the 2019 NIST Differential Privacy Synthetic Data Challenge.  Note that it provides good utility with much better privacy than the simple DP Histogram, but its selected marginals fail to capture some constraints on child records (in red). 

More information on the technique can be found \href{https://pages.nist.gov/privacy_collaborative_research_cycle/pages/techniques.html#smartnoise-mst}{here}.
The full metric report can be found \href{https://htmlpreview.github.io/?https://github.com/usnistgov/SDNist/blob/main/sdnist/report/sample-reports/report_mst_e10_demographic_focused_na2019_05-19-2023T18.01.12/report.html}{here}.

\begin{figure*}[h!]
    \begin{subfigure}[h!]{0.5\textwidth}
        \centering
        \includegraphics[width = \linewidth]{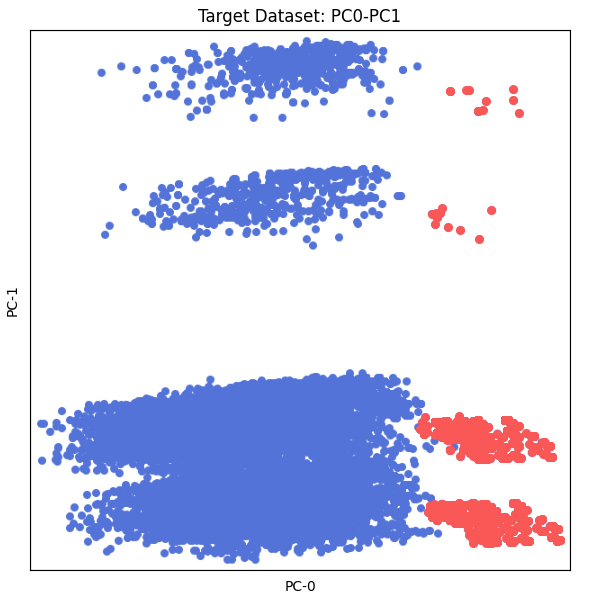} 
        \caption{Target data (Demographic-focused Feature Subset)}
        \label{MST1}
    \end{subfigure}
-
    \begin{subfigure}[h!]{0.5\textwidth}
        \centering
        \includegraphics[width = \linewidth]{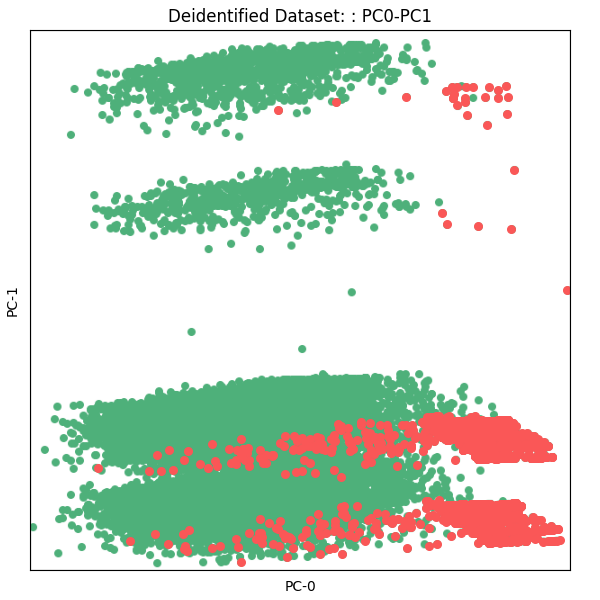}  
        \caption{Deidentified data}
        \label{MST2}
    \end{subfigure}
    \caption{The PCA Metric for MST ($\epsilon = 10$) }
\end{figure*}

\begin{table}[h!]
\centering
\begin{tabular}{l|l}
Label Name & Label Value\\\hline
Algorithm Name & mst\\
Library & smartnoise-synth\\
Feature Set & demographic-focused \\
Target Dataset & national2019\\
Epsilon & 10 \\
Variant Label & preprocessor-epsilon: 3\\
Privacy & differential privacy\\
Filename & mst\_e10\_demographic\_focused\_na2019\\
Records & 27253\\
Features & 10\\
Library Link & \href{https://github.com/opendp/smartnoise-sdk/tree/main/synth}{https://github.com/opendp/smartnoise-sdk/tree/main/synth}
\end{tabular}
\caption{\label{tab:mst-label}Label Information for SmartNoise MST (epsilon-10)}
\end{table}

\newpage
\subsection{R synthpop CART model}
The fully conditional Classification and Regression Tree (CART) model does not satisfy formal differential privacy, but provides better privacy than some techniques which do (Table \ref{deid_table}).  It uses a sequence of decision trees trained on the target data to predict each feature value based on the previously synthesized features; familiarity with decision trees is helpful for tuning this model. Note that the two PCA distributions have very similar shapes, comprised of different points. 

 You can find more information on the technique \href{https://pages.nist.gov/privacy_collaborative_research_cycle/pages/techniques.html#rsynthpop-cart}{here}.  The full metric report can be found \href{https://htmlpreview.github.io/?https://github.com/usnistgov/SDNist/blob/main/sdnist/report/sample-reports/report_cart_cf21_na2019_05-19-2023T18.01.12/report.html}{here}.

\begin{figure*}[h!]
    \begin{subfigure}[h!]{0.5\textwidth}
        \centering
        \includegraphics[width = \linewidth]{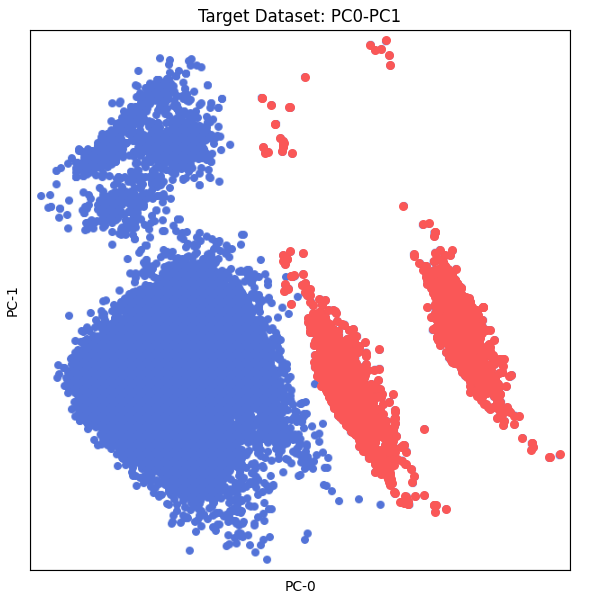} 
        \caption{Target data (21-Feature Subset)}
        \label{CART1}
    \end{subfigure}
-
    \begin{subfigure}[h!]{0.5\textwidth}
        \centering
        \includegraphics[width = \linewidth]{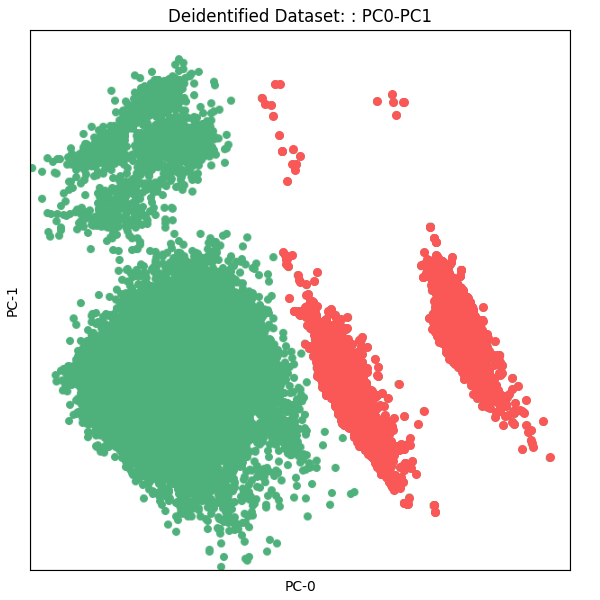}  
        \caption{Deidentified data}
        \label{CART2}
    \end{subfigure}
    \caption{The PCA Metric for CART}
\end{figure*}

\begin{table}[h!]
\centering
\begin{tabular}{l|l}
Label Name & Label Value\\\hline
Algorithm Name & cart\\
Library & rsynthpop\\
Feature Set & custom-features-21 \\
Target Dataset & national2019\\
Variant Label & maxfaclevels: 300\\
Privacy & Synthetic Data (Non-differentially Private)\\
Filename & cart\_cf21\_na2019\\
Records & 27253\\
Features & 21\\
Library Link & \href{https://cran.r-project.org/web/packages/synthpop/index.html}{https://cran.r-project.org/web/packages/synthpop/index.html}
\end{tabular}
\caption{\label{tab:rsynthpop-label}Label Information for R synthpop CART model}
\end{table}

\newpage
\subsection{MOSTLY AI Synthetic Data Platform}
MOSTLYAI is a proprietary synthetic data generation platform which uses a partly pretrained neural network model to generate  data. The model can be configured to respect deterministic constraints between features (for a comparison, see MOSTLYAI submissions 1 in the CRC Data and Metrics Bundle linked above). It does not provide differential privacy, but does very well on both privacy and utility metrics (Table \ref{deid_table}).   

More information on the technique can be found \href{https://pages.nist.gov/privacy_collaborative_research_cycle/pages/techniques.html#mostlyai-sd}{here}.  The full metric report can be found \href{https://htmlpreview.github.io/?https://github.com/usnistgov/SDNist/blob/main/sdnist/report/sample-reports/report_mostlyai_sd_platform_MichaelPlatzer_2_05-19-2023T18.01.12/report.html}{here}.

\begin{figure*}[h!]
    \begin{subfigure}[h!]{0.5\textwidth}
        \centering
        \includegraphics[width = \linewidth]{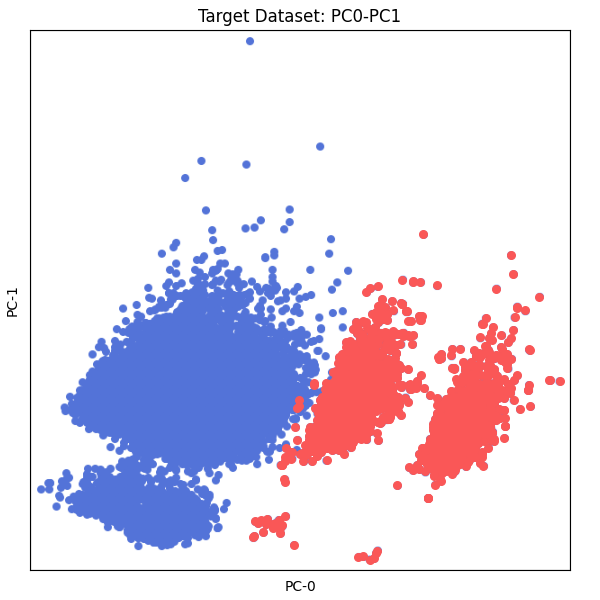} 
        \caption{Target data (All Features)}
        \label{MOSTLYAI1}
    \end{subfigure}
-
    \begin{subfigure}[h!]{0.5\textwidth}
        \centering
        \includegraphics[width = \linewidth]{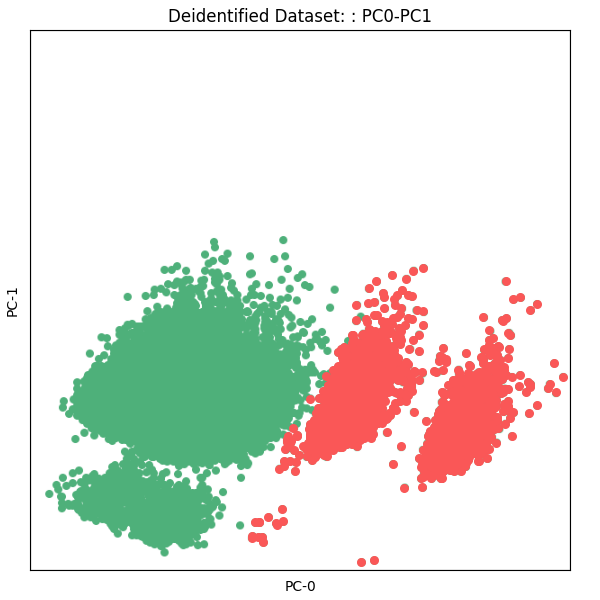}  
        \caption{Deidentified data}
        \label{MOSTLYAI2}
    \end{subfigure}
    \caption{The PCA Metric for MostlyAI}
\end{figure*}

\begin{table}[h!]
\centering
\begin{tabular}{l|l}
Label Name & Label Value\\\hline
Algorithm Name & MOSTLY AI\\
Submission Number & 2\\
Library & MostlyAI SD\\
Feature Set & all-features \\
Target Dataset & national2019\\
Variant Label & national2019\\
Privacy & Synthetic Data (Non-differentially Private)\\
Filename & mostlyai\_sd\_platform\_MichaelPlatzer\_2\\
Records & 27253\\
Features & 24\\
Library Link & \href{https://mostly.ai/synthetic-data}{https://mostly.ai/synthetic-data}
\end{tabular}
\caption{\label{tab:mostlyai-label}Label Information for MOSTLY AI Synthetic Data Platform}
\end{table}

\newpage
\subsection{Synthetic Data Vault CTGAN}
CTGAN is a type of Generative Adverserial Network designed to operate well on tabular data.  Unlike the MostlyAI neural network (which is pretrained with public data), the CTGAN network is only trained on the target data.  It is able to preserve some structure of the target data distribution, but it introduces artifacts.  In other metrics, we see it also has difficulty preserving diverse subpopulations. 
 
More information on the technique can be found \href{https://sdv.dev/SDV/user_guides/single_table/ctgan.html}{here}.
The full metric report can be found \href{https://htmlpreview.github.io/?https://github.com/usnistgov/SDNist/blob/main/sdnist/report/sample-reports/report_sdv_ctgan_epochs500_SlokomManel_1_05-19-2023T18.01.12/report.html}{here}.

\begin{figure*}[h!]
    \begin{subfigure}[h!]{0.5\textwidth}
        \centering
        \includegraphics[width = \linewidth]{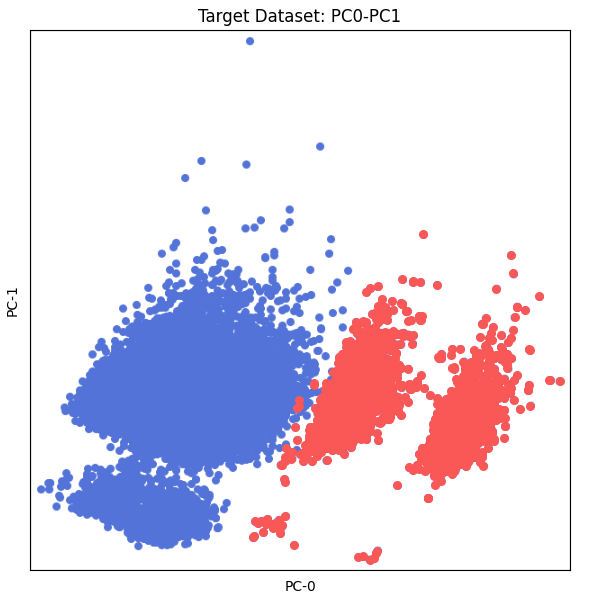} 
        \caption{Target data (All Features)}
        \label{CTGAN1}
    \end{subfigure}
-
    \begin{subfigure}[h!]{0.5\textwidth}
        \centering
        \includegraphics[width = \linewidth]{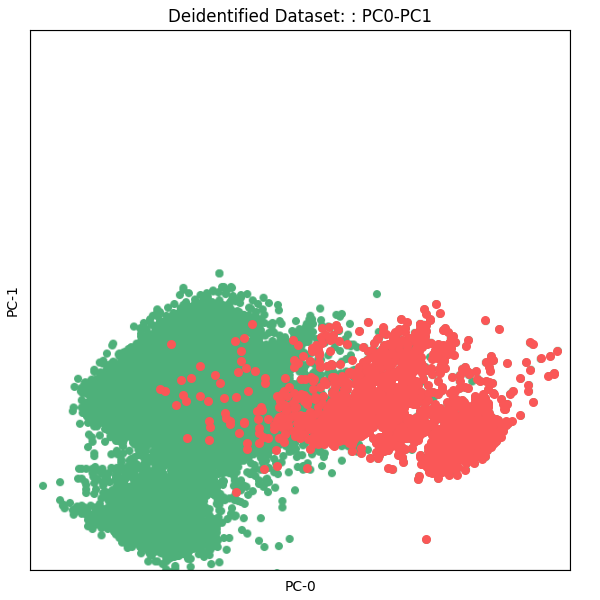}  
        \caption{Deidentified data}
        \label{CTGAN2}
    \end{subfigure}
    \caption{The PCA Metric for CTGAN}
\end{figure*}

\begin{table}[h!]
\centering
\begin{tabular}{l|l}
Label Name & Label Value\\\hline
Team & CBS-NL\\
Algorithm Name & ctgan\\
Submission Timestamp & 4/16/2023 12:03:58\\
Submission Number & 1\\
Library & sdv\\
Feature Set & all-features \\
Target Dataset & national2019\\
Variant Label & default CTGAN with epochs=500\\
Privacy & Synthetic Data (Non-differentially Private)\\
Filename & sdv\_ctgan\_epochs500\_SlokomManel\_1\\
Records & 27253\\
Features & 24\\
Library Link & \href{https://github.com/sdv-dev/CTGAN}{https://github.com/sdv-dev/CTGAN}
\end{tabular}
\caption{\label{tab:sdvctgan-label}Label Information for Synthetic Data Vault CTGAN}
\end{table}

\newpage
\subsection{synthcity ADSGAN}
ADSGAN is a Generative Adverserial Network focused on providing strong privacy for synthetic data.  While it doesn't formally satisfy differential privacy it uses a parameter alpha to inject noise during the training process.  Unfortunately, we see it is unable to preserve any meaningful structure from the target data distribution in this submission.  

More information on the technique can be found \href{https://pages.nist.gov/privacy_collaborative_research_cycle/pages/techniques.html#synthcity-adsgan}{here}.
The full metric report can be found \href{https://htmlpreview.github.io/?https://github.com/usnistgov/SDNist/blob/main/sdnist/report/sample-reports/report_adsgan_ZhaozhiQian_1_05-19-2023T18.01.12/report.html}{here}.

\begin{figure*}[h!]
    \begin{subfigure}[h!]{0.5\textwidth}
        \centering
        \includegraphics[width = \linewidth]{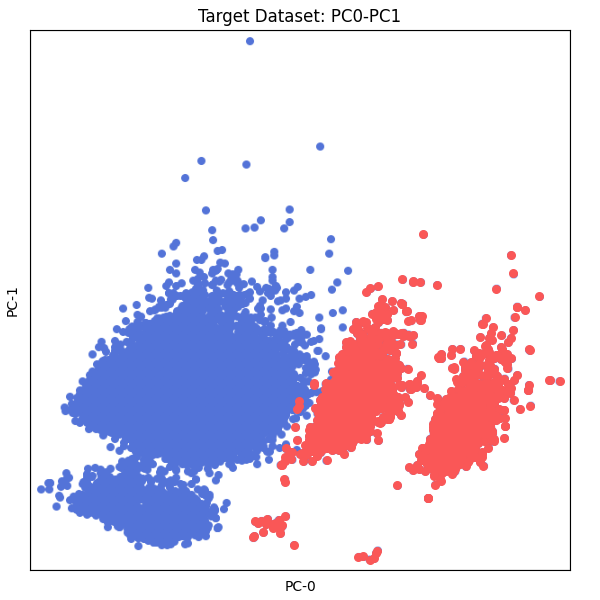} 
        \caption{Target data (All Features)}
        \label{CART1a}
    \end{subfigure}
-
    \begin{subfigure}[h!]{0.5\textwidth}
        \centering
        \includegraphics[width = \linewidth]{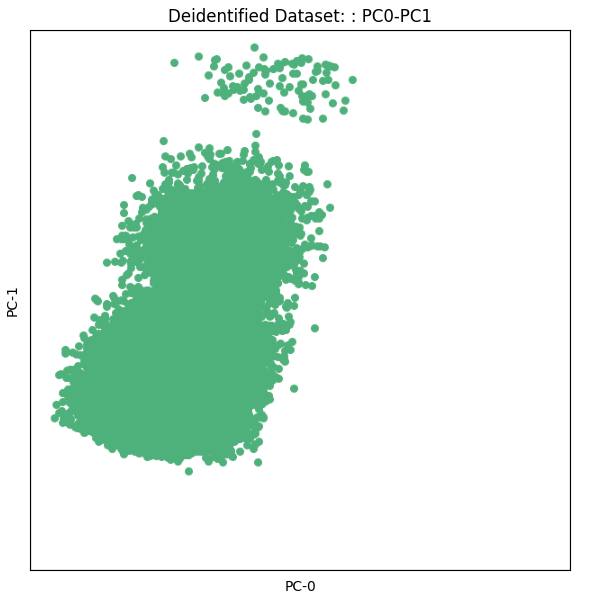}  
        \caption{Deidentified data}
        \label{CART2a}
    \end{subfigure}
    \caption{The PCA Metric for ADSGAN}
\end{figure*}

\begin{table}[h!]
\centering
\begin{tabular}{l|l}
Label Name & Label Value\\\hline
Team & CCAIM\\
Submission Timestamp & 3/9/2023 3:33:23\\
Submission Number & 1\\
Algorithm Name & adsgan\\
Library & synthcity\\
Feature Set & all-features\\
Target Dataset & national2019\\
Variant Label & default, lambda=10\\
Privacy & Synthetic Data (Non-differentially Private)\\
Filename & adsgan\_ZhaozhiQian\_1\\
Records & 21802\\
Features & 24\\
Library Link & \href{https://github.com/vanderschaarlab/synthcity}{https://github.com/vanderschaarlab/synthcity}
\end{tabular}
\caption{\label{tab:synthcityadsgan-label}Label Information for synthcity ADSGAN}
\end{table}


\small
\bibliographystyle{unsrturl}
\bibliography{references}

\end{document}


\section{Feature Definitions and Properties}

Mention recommended feature subsets to make things accessible for guys that need smaller feature spaces, and also focus on different behavior.

Feature, description, type/cardinality,  demographic-focused, industry-focused, family-focused, simpler-21, full-features, challenges

\section{Selected SDNist Metrics}
In the main paper used this package to demonstrate the efficacy of the Excerpts data for identifying and diagnosing different deidentification algorithm behavior.  We provide a more detailed introduction to the metrics here, and in the next section we provide meta-data and expanded evaluation results on the selected deidentified data samples. 
 
\subsection{K-marginal and Equivalent Subsample}
\subsection{Unique Exact Match}
\subsection{Regression}
\subsection{Propensity}
\subsection{PCA}

\section{Detailed Evaluation Results on Selected Deidentified Data Samples}
In this section we provide meta-data and expanded evaluation results on the selected deidentified data samples.

\newpage
\subsection{Differentially Private Histogram (epsilon-10)}

\begin{figure*}[h!]
\centering
        \centering
        \includegraphics[width=\linewidth]{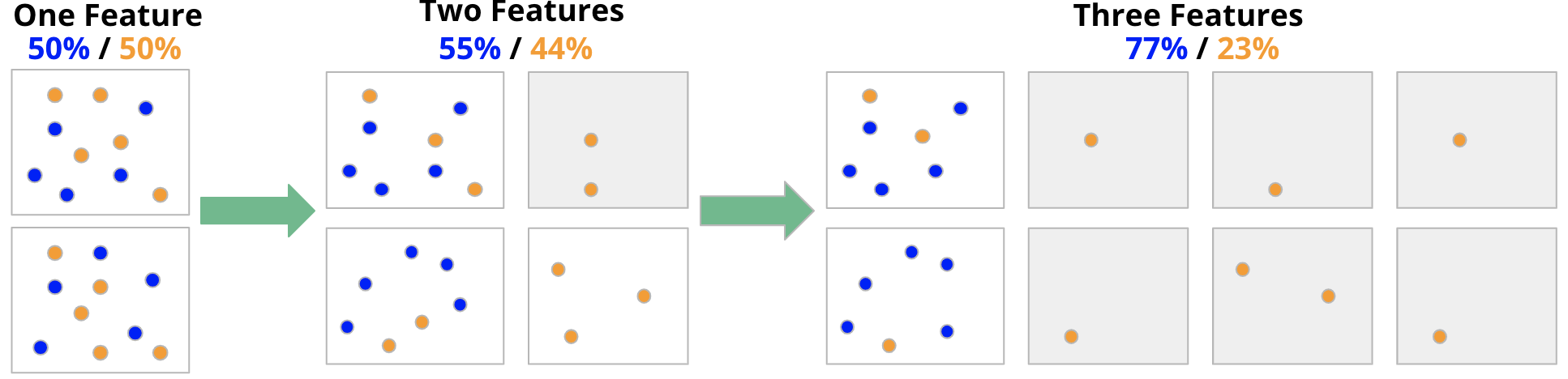}  
        \caption{}
        \label{fig:orange_blue}
    \end{figure*}
~
\begin{figure*}
    \begin{subfigure}[t]{0.5\textwidth}
        \centering
        \includegraphics[width = 0.95 \linewidth]{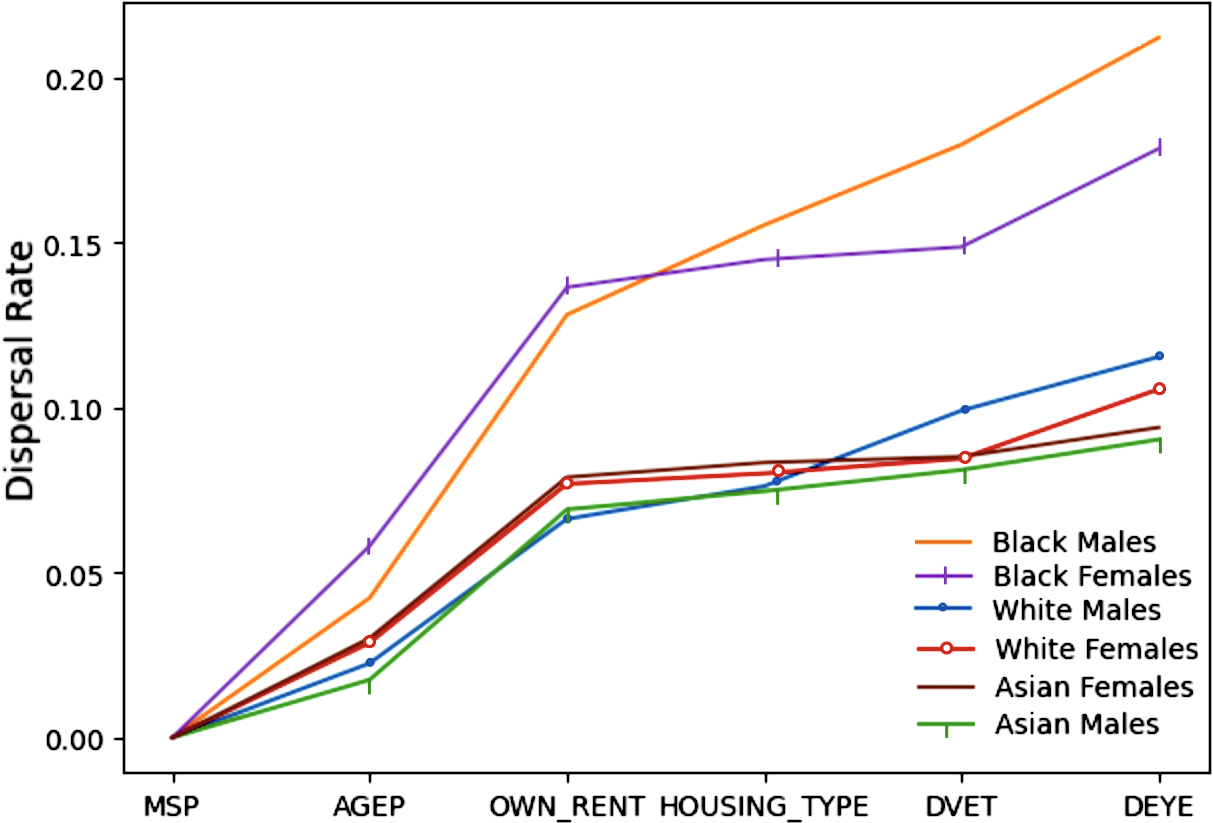} 
        \caption{Differing dispersal by demographic group in the Diverse Community Excerpts}
        \label{fig:dispersal_data}
    \end{subfigure}
~
    \begin{subfigure}[t]{0.5\textwidth}
        \centering
        \includegraphics[width = \linewidth]{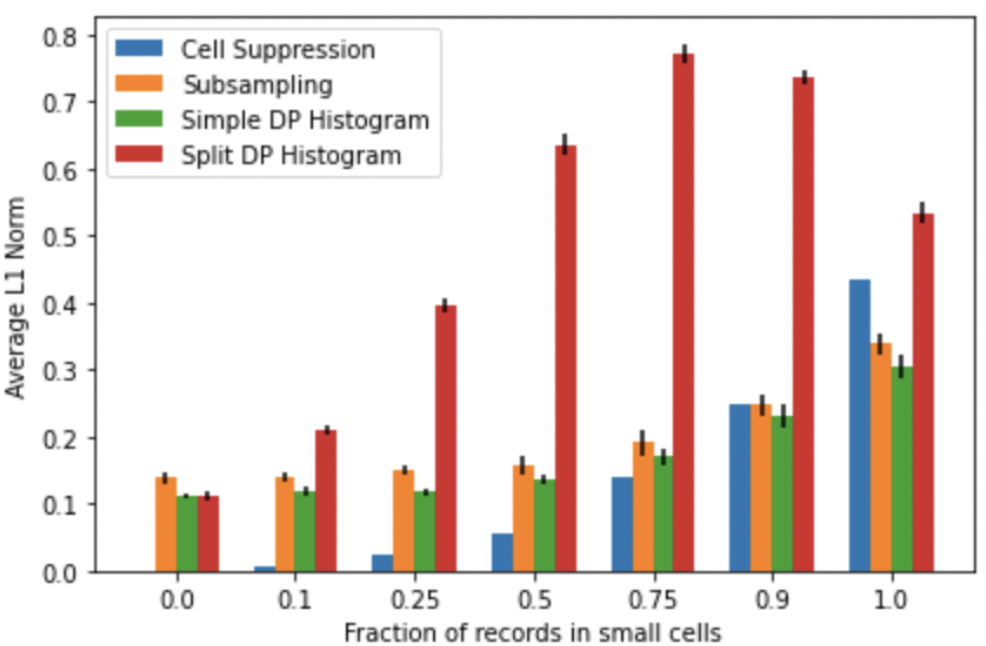}  
        \caption{Increasing error of deidentification by fraction of population dispersed in small count bins}
        \label{fig:small_bins}
    \end{subfigure}
    \caption{The equity dynamics of diverse data, dispersal, and deidentification}
\end{figure*}

\begin{table}
\centering
\begin{tabular}{l|l}
Label Name & Label Value\\\hline
Algorithm Name & DPHist\\
Library & tumult\\
Feature Set & demographic-focused-except-AGEP-DEYE \\
Target Dataset & national2019\\
Epsilon & 10 \\
Privacy & Differential Privacy\\
Filename & dphist\_e\_10\_cf8\_na2019\\
Records & 27314\\
Features & 8\\
Library Link & \hrf{https://docs.tmlt.dev/analytics/latest/}{https://docs.tmlt.dev/analytics/latest/}
\end{tabular}
\caption{\label{tab:dphist-label}Label Information for Differential Private Histogram (epsilon-10)}
\end{table}

\newpage
\subsection{SmartNoise PACSynth (epsilon-10)}

\begin{table}
\centering
\begin{tabular}{l|l}
Label Name & Label Value\\\hline
Algorithm Name & pacsynth\\
Library & smartnoise-synth\\
Feature Set & industry-focused\\
Target Dataset & national2019\\
Epsilon & 10 \\
Variant Label & preprocessor-epsilon: 3\\
Privacy & Differential Privacy\\
Filename & pac\_synth\_e\_10\_industry\_focused\_na2019\\
Records & 29537\\
Features & 9\\
Library Link & \href{https://github.com/opendp/smartnoise-sdk/tree/main/synth}{https://github.com/opendp/smartnoise-sdk/tree/main/synth}
\end{tabular}
\caption{\label{tab:pacsynth-label}SmartNoise PACSynth (epsilon-10)}
\end{table}

\newpage
\subsection{SmartNoise MST (epsilon-10)}

\begin{table}
\centering
\begin{tabular}{l|l}
Label Name & Label Value\\\hline
Algorithm Name & mst\\
Library & smartnoise-synth\\
Feature Set & demographic-focused \\
Target Dataset & national2019\\
Epsilon & 10 \\
Variant Label & preprocessor-epsilon: 3\\
Privacy & Differential Privacy\\
Filename & mst\_e10\_demographic\_focused\_na2019\\
Records & 27253\\
Features & 10\\
Library Link & \href{https://github.com/opendp/smartnoise-sdk/tree/main/synth}{https://github.com/opendp/smartnoise-sdk/tree/main/synth}
\end{tabular}
\caption{\label{tab:mst-label}Label Information for SmartNoise MST (epsilon-10)}
\end{table}

\newpage
\subsection{R synthpop CART model}

\begin{table}
\centering
\begin{tabular}{l|l}
Label Name & Label Value\\\hline
Algorithm Name & cart\\
Library & rsynthpop\\
Feature Set & custom-features-21 \\
Target Dataset & national2019\\
Variant Label & maxfaclevels: 300\\
Privacy & Synthetic Data (Non-differentially Private)\\
Filename & cart\_cf21\_na2019\\
Records & 27253\\
Features & 21\\
Library Link & \href{https://cran.r-project.org/web/packages/synthpop/index.html}{https://cran.r-project.org/web/packages/synthpop/index.html}
\end{tabular}
\caption{\label{tab:rsynthpop-label}Label Information for R synthpop CART model}
\end{table}

\newpage
\subsection{MOSTLY AI Synthetic Data Platform}

\begin{table}
\centering
\begin{tabular}{l|l}
Label Name & Label Value\\\hline
Algorithm Name & MOSTLY AI\\
Submission Number & 2\\
Library & MostlyAI SD\\
Feature Set & all-features \\
Target Dataset & national2019\\
Variant Label & national2019\\
Privacy & Synthetic Data (Non-differentially Private)\\
Filename & mostlyai\_sd\_platform\_MichaelPlatzer\_2\\
Records & 27253\\
Features & 24\\
Library Link & \href{https://mostly.ai/synthetic-data}{https://mostly.ai/synthetic-data}
\end{tabular}
\caption{\label{tab:mostlyai-label}Label Information for MOSTLY AI Synthetic Data Platform}
\end{table}

\newpage

\subsection{Synthetic Data Vault CTGAN}

\begin{table}
\centering
\begin{tabular}{l|l}
Label Name & Label Value\\\hline
Team & CBS-NL\\
Algorithm Name & ctgan\\
Submission Timestamp & 4/16/2023 12:03:58\\
Submission Number & 1\\
Library & sdv\\
Feature Set & all-features \\
Target Dataset & national2019\\
Variant Label & default CTGAN with epochs=500\\
Privacy & Synthetic Data (Non-differentially Private)\\
Filename & sdv\_ctgan\_epochs500\_SlokomManel\_1\\
Records & 27253\\
Features & 24\\
Library Link & \href{https://github.com/sdv-dev/CTGAN}{https://github.com/sdv-dev/CTGAN}
\end{tabular}
\caption{\label{tab:sdvctgan-label}Label Information for Synthetic Data Vault CTGAN}
\end{table}

\newpage
\subsection{synthcity ADSGAN}

\begin{table}
\centering
\begin{tabular}{l|l}
Label Name & Label Value\\\hline
Team & CCAIM\\
Submission Timestamp & 3/9/2023 3:33:23\\
Submission Number & 1\\
Algorithm Name & adsgan\\
Library & synthcity\\
Feature Set & all-features\\
Target Dataset & national2019\\
Variant Label & default, lambda=10\\
Privacy & Synthetic Data (Non-differentially Private)\\
Filename & adsgan\_ZhaozhiQian\_1\\
Records & 21802\\
Features & 24\\
Library Link & \href{https://github.com/vanderschaarlab/synthcity}{https://github.com/vanderschaarlab/synthcity}
\end{tabular}
\caption{\label{tab:synthcityadsgan-label}Label Information for synthcity ADSGAN}
\end{table}